\newtheorem{lemma}{Lemma}
\def\1{\bm{1}}
\def\rva{{\mathbf{a}}}
\def\rvf{{\mathbf{f}}}
\def\rvm{{\mathbf{m}}}
\def\rvp{{\mathbf{p}}}
\def\rvr{{\mathbf{r}}}
\def\rvx{{\mathbf{x}}}
\def\rvy{{\mathbf{y}}}
\def\rvz{{\mathbf{z}}}
\def\rmA{{\mathbf{A}}}
\def\rmH{{\mathbf{H}}}
\DeclareMathAlphabet{\mathsfit}{\encodingdefault}{\sfdefault}{m}{sl}
\SetMathAlphabet{\mathsfit}{bold}{\encodingdefault}{\sfdefault}{bx}{n}
\DeclareMathOperator*{\argmax}{arg\,max}
\begin{document}

\title{A Measurement Report Data-Driven Framework for Localized Statistical Channel Modeling}

\author{Xinyu~Qin, Ye~Xue, Qi~Yan, Shutao~Zhang, Bingsheng~Peng and Tsung-Hui~Chang~\IEEEmembership{Fellow,~IEEE}
\thanks{

The work was supported in part by the National Key Research and Development Program of China under Grant 2024YFA1014201 (Corresponding author: Tsung-Hui Chang).

Xinyu Qin and Bingsheng Peng are with the Shenzhen Research Institute of Big Data, School of Science and Engineering, The Chinese University of Hong Kong, Shenzhen, Guangdong 518172, China (email: xinyuqin@link.cuhk.edu.cn, bingshengpeng@link.cuhk.edu.cn). Ye Xue is with the Shenzhen Research Institute of Big Data, School of Data Science, The Chinese University of Hong Kong, Shenzhen, Guangdong 518172, China (e-mail: xueye@cuhk.edu.cn). Qi Yan and Shutao Zhang are with Networking and User Experience Lab, Huawei Technologies, China (email: yanqi1@huawei.com, zhangshutao2@huawei.com). Tsung-Hui Chang is with the Shenzhen Research Institute of Big Data, School of Artificial Intelligence, The Chinese University of Hong Kong, Shenzhen, Guangdong 518172, China (email: tsunghui.chang@ieee.org).}
}


\markboth{Journal of \LaTeX\ Class Files,~Vol.~14, No.~8, August~2021}%
{Shell \MakeLowercase{\textit{et al.}}: A Sample Article Using IEEEtran.cls for IEEE Journals}


\maketitle

\begin{abstract}
Localized statistical channel modeling (LSCM) is crucial for effective performance evaluation in digital twin-assisted network optimization. Solely relying on the multi-beam reference signal receiving power (RSRP), LSCM aims to model the localized statistical propagation environment by estimating the channel angular power spectrum (APS). However, existing methods rely heavily on drive test data with high collection costs and limited spatial coverage. In this paper, we propose a measurement report (MR) data-driven framework for LSCM, exploiting the low-cost and extensive collection of MR data. The framework comprises two novel modules. The MR localization module addresses the issue of missing locations in MR data by introducing a semi-supervised method based on hypergraph neural networks, which exploits multi-modal information via distance-aware hypergraph modeling and hypergraph convolution for location extraction. To enhance the computational efficiency and solution robustness, LSCM operates at the grid level. Compared to independently constructing geographically uniform grids and estimating channel APS, the joint grid construction and channel APS estimation module enhances robustness in complex environments with spatially non-uniform data by exploiting their correlation. This module alternately optimizes grid partitioning and APS estimation using clustering and improved sparse recovery for the ill-conditioned measurement matrix and incomplete observations. Through comprehensive experiments on a real-world MR dataset, we demonstrate the superior performance and robustness of our framework in localization and channel modeling.  
\end{abstract}

\begin{IEEEkeywords}
Angular power spectrum, localized statistical channel modeling, grid construction, hypergraph neural networks, localization, measurement report data, sparse recovery.
\end{IEEEkeywords}

\section{Introduction}
\label{Section:Introduction}
\IEEEPARstart{W}{ith} the rapid evolution of wireless communications, network optimization has become increasingly critical for the development and deployment of next-generation wireless networks\cite{10155734,10636212,li2022real}. Millions of tunable network parameters are intricately coupled, requiring careful configuration to align with real-world conditions so as to ensure optimal network performance. Traditional optimization methods, relying on expert knowledge and trial-and-error campaigns, are labor-intensive and error-prone\cite{6353680}. Recent advancements in big data and cloud-based computational power have enabled digital twin-assisted network optimization \cite{10416388,li2025generative}, which creates virtual replicas of real-world wireless networks. These digital twins simulate network behaviors and optimize parameters using network simulators \cite{10605806,10697404,10234427}. However, to bridge the gap between simulated and real-world networks—and thereby ensure effective network optimization—it is essential to precisely characterize the localized propagation environment and accurately reproduce the intricate behavior of wireless channels between mobile devices and base stations (BSs).

To achieve these goals, localized statistical channel modeling (LSCM) was proposed in \cite{10299600}. Specifically, LSCM divides the service area of a BS into grids, collects multi-beam reference signal received power (RSRP) measurements within each grid, and estimates the channel angular power spectrum (APS) between the BS and the grid based on its statistical relationship with the multi-beam RSRP. The key concept is to utilize real-world measurements to extract the multi-path structure of the localized propagation environment, facilitating network performance evaluation under different parameter settings. However, as a data-driven framework, LSCM requires a large amount of real-world data to ensure its effectiveness and coverage. Existing methods\cite{10299600,9940390} depend on data collected from drive tests (DTs). Although DTs provide accurate measurements, they are time-consuming and labor-intensive, which restricts their practical application. Moreover, because DT data is typically collected along roads during planned time periods, its spatial coverage is inherently limited, restricting the coverage area and effective period of LSCM and potentially degrading its reliability\cite{8640805}.

In response to the limitations of DTs, measurement report (MR) data have emerged as a promising alternative for enhancing LSCM. When mobile devices initiate calls or access data services, MR data is generated to report connection states (e.g., RSRP) between mobile devices and BSs. Compared to using DT data, MR data-driven LSCM offers several advantages: 1) it enables large-scale network performance estimation by leveraging the ubiquitous presence of mobile devices; 2) it supports real-time and long-term monitoring through continuous data accumulation; and 3) it provides a cost-effective solution via automated data collection. However, the unique characteristics of MR data present the following challenges when applied to existing methods\cite{10299600,9940390,10924577}.

The first challenge is the lack of device location information in MR data due to privacy protection or mobile device settings. As previously mentioned, LSCM operates at the grid level using the in-grid average multi-beam RSRP as input. This not only improves the efficiency of large-scale channel modeling by reducing the storage and computational burden of massive data but also enhances the robustness of statistical channel modeling by alleviating the effects of measurement errors and data sparsity. For example, relying on the DT data, the existing methods \cite{10299600,9940390} construct equally sized geographic grids through uniform spatial discretization, exploiting the spatial correlation between adjacent grids in the propagation environment \cite{7275189}. However, this approach becomes infeasible when extended to MR data, as MR data only provide connection state information without high-accuracy location labels. Therefore, it is necessary to localize MR data. The 3GPP has introduced geometric localization methods based on time difference of arrival and angle of arrival \cite{fischer20215g}. Some data-driven methods employ machine learning and statistical analysis to extract localization patterns from RSRP, such as random forest \cite{zhu2016city} or deep neural networks \cite{butt2020rf}. However, their localization accuracy depends on sufficient fingerprint data, which is usually unavailable in practical scenarios.
 
The second challenge arises from the complex localized propagation environment and the non-uniform spatial distribution of MR data. On one hand, due to environmental factors such as terrain and buildings, physically consistent grids could have a dramatically different channel structure. As a result, grids construction that is solely based on geographic location may fail to align with the localized propagation environment. On the other hand, unlike DT data, which is densely and continuously collected through dedicated measurement campaigns in specified regions, MR data exhibits a non-uniform spatial distribution shaped by user mobility patterns\cite{8014487}, and some grids can have only a few RSRP samples, causing poor channel APS estimation. Therefore, the traditional uniform grid construction methods are no longer suitable. These factors motivate us to develop a more adaptive framework, where the channel APS estimation is jointly optimized with the grid construction to better capture the localized statistical channel structures in complex propagation environments with MR data.

To address the above challenges, this paper proposes an MR data-driven framework for localized statistical channel modeling (MR-LSCM). As illustrated in Fig.~\ref{fig:MR_LSCM}, the framework consists of two modules: 1) MR localization and 2) joint grid construction and channel APS estimation. Specifically, the MR localization module aims to localize extensive MR samples with minimal calibration fingerprints. To this end, we propose a hypergraph neural network-based semi-supervised MR localization method (HGNN-Loc). By leveraging the multi-modal information inherent in MR data, we model the data as a distance-aware hypergraph—a more general graph structure that encodes high-order data correlations with degree-free hyperedges\cite{bretto2013hypergraph}. Furthermore, the location information of MR data is extracted using hypergraph convolution\cite{gao2022hgnn+}. 

\begin{figure}[t]
    \centering
    \includegraphics[width=\linewidth]{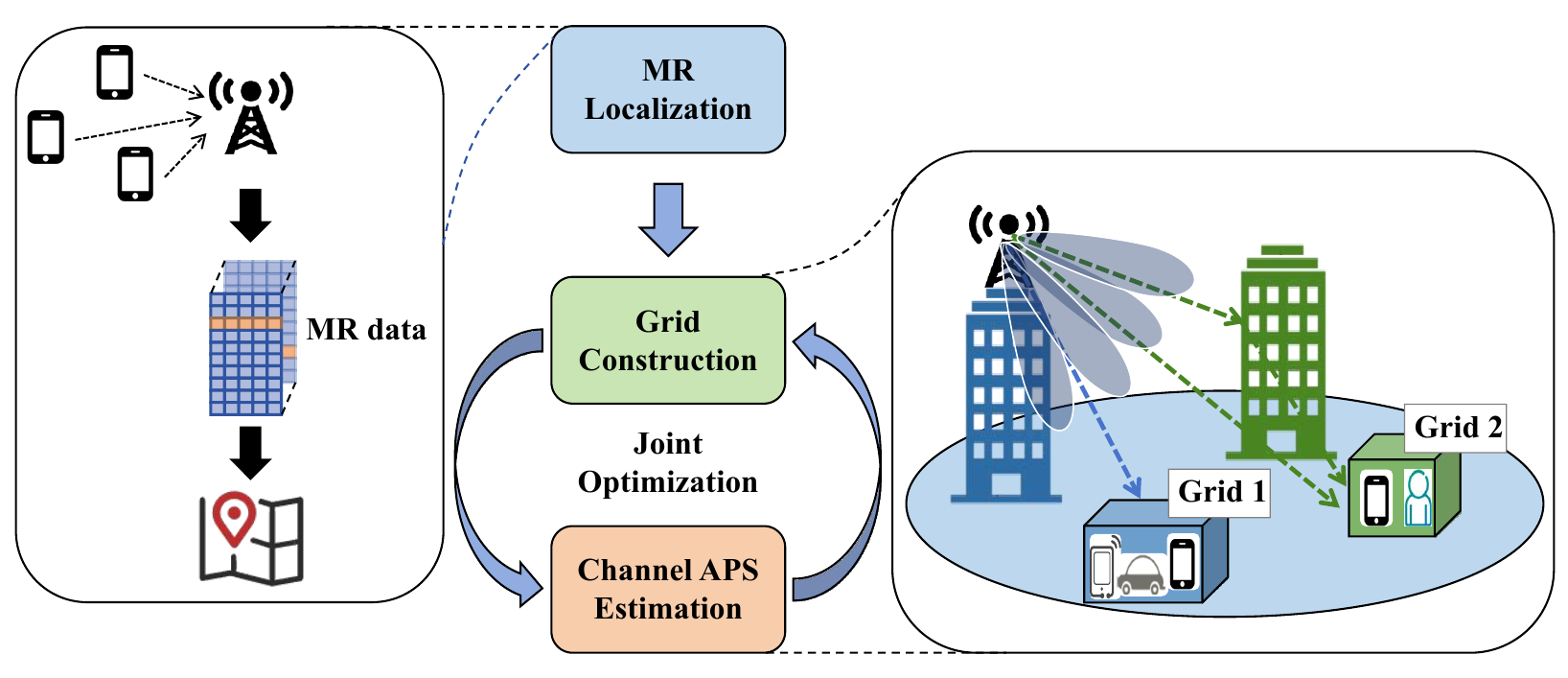}
    \caption{\textbf{The proposed MR-LSCM framework.} Our framework consists of two modules: 1) MR localization and 2) joint grid construction and channel APS estimation. The MR localization module predicts the location of each MR sample. Based on these predictions, the joint grid construction and channel APS estimation module discretizes the spatial domain and extracts multi-path channel structures of the localized propagation environment.}
    \label{fig:MR_LSCM}
    \vspace{-5pt}
\end{figure}

For the second module, we formulate a joint clustering and sparse recovery problem to unify grid construction and channel APS estimation. In particular, by clustering MR samples with similar channel APS, we construct grids in which samples share consistent channel multi-path structures, and estimate the channel APS within each grid via sparse recovery. To ensure that the constructed grids are well aligned with the localized propagation environment, we define a clustering space that integrates multi-beam RSRP and location. The physical consistency of the grids is quantified by computing the weighted sum of the in-grid RSRP variance and the location variance, which captures both the signal homogeneity and spatial correlation inherent in similar multi-path channel structures. Then, we design an alternating optimization approach to efficiently solve this non-convex problem with coupled variables. Considering the effects of the ill-conditioned measurement matrix and missing beam measurements, a geometry- and missing-value-aware non-negative orthogonal matching pursuit (GM-NNOMP) algorithm is proposed for channel APS estimation. Extensive experimental results on a real-world MR dataset demonstrate the superior performance and robustness of the proposed framework in localization and channel modeling.

\begin{table*}[t]
\centering
\caption{A detailed example of 5G MR data}
\begin{tabular}{ccc|cccccccccccccccc}
\hline
\multicolumn{3}{c|}{\textbf{Connection Information}}                                                   & \multicolumn{16}{c}{\textbf{Multi-beam RSRP (dBm)}}                                                                \\ \hline
\multicolumn{1}{c|}{\textbf{Time}}      & \multicolumn{1}{c|}{\textbf{gNodeBCellID}} & \textbf{CallID} & \multicolumn{8}{c|}{\textbf{Serving Cell}}                         & \multicolumn{8}{c}{\textbf{Neighboring Cell}} \\ \hline
\multicolumn{1}{c|}{2022/7/19 10:10:23} & \multicolumn{1}{c|}{12595215\_3}           & 268615725       & -73 & -80 & -71 & -73 & -80 & \text{\texttimes}   & -87 & \multicolumn{1}{c|}{-84} & -87   & \text{\texttimes}   & -89  & \text{\texttimes}  & \text{\texttimes}  & \text{\texttimes}  & \text{\texttimes}  & -89  \\
\multicolumn{1}{c|}{2022/7/19 10:10:24} & \multicolumn{1}{c|}{12595215\_3}           & 268615725       & -73 & -80 & -72 & -74 & -82 & -87 & \text{\texttimes}   & \multicolumn{1}{c|}{-86} & -87   & \text{\texttimes}   & -89  & \text{\texttimes}  & \text{\texttimes}  & \text{\texttimes}  & \text{\texttimes}  & -88  \\
\multicolumn{1}{c|}{2022/7/19 10:10:25} & \multicolumn{1}{c|}{12595215\_3}           & 268615725       & -78 & -79 & -70 & -74 & -82 & \text{\texttimes}   & -87 & \multicolumn{1}{c|}{-86} & -92   & \text{\texttimes}   & -95  & \text{\texttimes}  & \text{\texttimes}  & \text{\texttimes}  & \text{\texttimes}  & -87  \\
\multicolumn{1}{c|}{2022/7/19 10:10:26} & \multicolumn{1}{c|}{12595215\_3}           & 268615725       & -75 & -73 & -65 & -68 & -75 & \text{\texttimes}   & -82 & \multicolumn{1}{c|}{-78} & -89   & \text{\texttimes}   & \text{\texttimes}    & \text{\texttimes}  & \text{\texttimes}  & \text{\texttimes}  & \text{\texttimes}  & -80  \\ \hline
\end{tabular}
\label{tab:mr}
\vspace{-5pt}
\end{table*}

Our key contributions are summarized as follows.
\begin{itemize}
    \item \textbf{MR Data-driven LSCM:} We propose MR-LSCM, an MR data-driven framework for LSCM. To integrate MR data into LSCM, we design two modules: 1) MR localization module and 2) joint grid construction and channel APS estimation. The MR localization module predicts the locations of MR data. For joint grid construction and channel APS estimation, we formulate a joint clustering and sparse recovery problem to enhance solution robustness under complex environments with spatially non-uniform data. To construct grids aligned with the localized propagation environment, we define a clustering space that integrates multi-beam RSRP and location, jointly prompting the signal homogeneity and spatial correlation of physically consistent grids.
    \item \textbf{HGNN-based Semi-supervised MR Localization:} To address the lack of device location information in MR data, we propose HGNN-Loc. We exploit the multi-modal information inherent in MR data to construct a distance-aware hypergraph—a more general graph structure that enables modeling of high-order correlations through degree-free hyperedges. Furthermore, we extract the locations of each MR sample via hypergraph convolution. Through semi-supervised learning, HGNN-Loc can be effectively trained with minimal location labels.
    \item \textbf{Joint Grid Construction and Channel APS Estimation:} To solve the joint clustering and sparse recovery problem, we adopt an alternating optimization approach. Grids are constructed by clustering based on predicted locations and estimated channel APS. The average multi-beam RSRP within each grid is used to estimate the channel APS via sparse recovery. To enhance the robustness of sparse recovery under the ill-conditioned measurement matrix and incomplete observation, we design a GM-NNOMP algorithm by incorporating geometry-aware weighted column selection and missing-value-aware constrained non-negative least squares. 
\end{itemize}

The rest of this paper is organized as follows. Section II reviews the background and related work. The system model and problem formulation are presented in Section III. Section IV delineates the proposed MR-LSCM framework. Experimental results are demonstrated in Section V. Finally, Section VI concludes the paper.

\par \textit{Notations}: In this paper, scalars are denoted by lower-case letters $a$, vectors are denoted by boldface letters $\rva$, matrices are denoted by upper-case letters $\rmA$, and sets are denoted by $\mathcal{A}$. The cardinality of set $\mathcal{A}$ is denoted by $|\mathcal{A}|$. The $i$-th entry of a vector $\mathbf{a}$ is denoted by $[\mathbf{a}]_i$. The operators $(\cdot)^T$, $\|\cdot\|_l$ and $\odot$ represent transpose, $l$-th norm, and Hadamard product, respectively. $\mathbb{E}[\cdot]$ denote the statistical expectation.

\section{Background and Related Work}
In this section, we provide the background on 5G MR data, followed by a review of MR localization and LSCM.

\subsection{5G Measurement Report Data}
\label{Subsection: 5G MR}
MR data is generated to report connection states between mobile devices and connected cells during phone calls or data service access. Table~\ref{tab:mr} gives an example of 5G MR data reported by a mobile phone. The fundamental information required by our work consists of:

1) Connection information. BSs track the identities of mobile devices and gNodeBs\footnote{The gNodeB is a critical component in the 5G radio access network. It serves as the base station that communicates directly with mobile devices, facilitating wireless connectivity over the 5G network.}, including the serving cell and neighboring cells. The {\em gNodeBCellID} serves as both the gNodeB identifier and the cell identifier, uniquely identifying the connected cell. The {\em CallID} acts as a one-time service identifier to uniquely identify the connected mobile devices. Each MR sample records the reporting time through a timestamp. 2) Multi-beam RSRP. In 5G networks, the multi-beam RSRP can be measured from synchronization signals block (SSB) beams in downlink. However, due to the reporting mechanisms in \cite{3GPP_38.133_2022} and beam failures, there are missing values in multi-beam RSRP measurements, particularly those from neighboring cells.

\subsection{MR Localization}
Accurate localization of mobile devices is crucial for MR data applications, including network planning \cite{9456090}, resource allocation \cite{9449946}, and cellular traffic prediction \cite{8466626}. In both industry and academia, various localization methods with MR data have been developed. Google MyLocation\cite{googlemaps2020} approximates outdoor locations of MR data by the locations of connected cells. This simple method suffers from median errors of hundreds and even thousands of meters. More recently, data-driven methods have attracted intensive research interests, which can be categorized into fingerprint-based and learning-based localization. Fingerprint-based methods build a fingerprint database by collecting MR data  (e.g., timing advance, cell ID, and received signal strength indicator \cite{6062428,7524370,8057097}) tagged with corresponding locations. Localization is then performed through pattern matching with the known fingerprint database. With the development of wireless communication\cite{10158439}, such as larger signal bandwidths, denser network deployments, and multi-antenna technology, more types of fingerprints are supported to improve localization accuracy (e.g., multi-band and multi-cell RSRP\cite{10556755} and multi-beam RSRP\cite{10486831}). Learning-based methods establish a mapping from MR samples to associated locations by training models, including classic machine learning methods (e.g., random forest\cite{zhu2016city}) and deep learning methods (e.g., deep neural networks \cite{deeploc}). To achieve acceptable localization accuracy, these data-driven methods rely on training dataset constructed from a large number of labeled samples.

\subsection{Localized Statistical Channel Modeling}
\label{Background: LSCM}
Localized statistical channel modeling (LSCM) is a physics-based and data-driven channel modeling framework tailored for network optimization\cite{li2022real}. Unlike traditional geometry-based stochastic models that are limited to typical scenarios or deterministic channel modeling that relies on high-accuracy environmental information, LSCM effectively captures the statistical information of the localized propagation environment by estimating the channel APS from real-world RSRP measurements\cite{10299600}. Based on the derived statistical relationship between the beam-wise RSRP and the channel APS, the task of LSCM is formulated as a sparse recovery problem in\cite{10299600}. However, this problem is challenging to solve due to the ill-conditioned nature of the measurement matrix, characterized by non-uniform and closely parallel columns. To this end, \cite{10299600} proposed the weighted non-negative orthogonal matching pursuit (WNOMP) and second-order-statistics-based WNOMP algorithms. To leverage the spatial correlation of wireless channels, \cite{9940390} considered LSCM for multi-grid data and introduced a regularization term that constrains both sparsity and similarity. Similarly, \cite{10924577} proposed a graph neural network (GNN)-based approach for the multi-grid LSCM problem. However, these methods assume the availability of high-accuracy locations to construct uniform grids and neglect its correlation with channel APS estimation, which may degrade their performance when extended to MR data-driven LSCM.

\section{System Model and Problem Formulation}
\label{Section: System Model}
In this section, we first briefly review the localized statistical channel model. Subsequently, to develop MR data-driven LSCM, we formulate the joint grid construction and channel APS estimation problem. For ease of elaboration, we will assume that the locations of MR data are given, and leave the proposed MR localization method in Section~\ref{subsec:local}.

\subsection{Localized Statistical Channel Model}
\label{System:LSCM}
 Consider a wireless communication network where the BS is equipped with a uniform planar array (UPA) of $N_T = N_X \times N_Y$ antennas, where $N_X$ and $N_Y$ are the dimensions of the UPA. We discretize the tilt and azimuth AoDs uniformly into $N_V$ and $N_H$ angles based on the coordinate relative to the UPA, denoted as $\theta_i \  (1 \le i \le N_V)$ and $\varphi_j \ (1 \le j \le N_H)$, respectively. The downlink channel coefficient of the $(x, y)$-th antenna is modeled as\cite{38901}
\begin{align}
	h_{x, y} = & \sum_{i = 1}^{N_V} \sum_{j = 1}^{N_H}  \sqrt{\alpha_{i,j} } \times g_{i, j} \times e^{-j2\pi\frac{d_x x}{\lambda}\cos \theta_i\sin \varphi_j}  \nonumber\\
    &  \quad \qquad \times e^{-j2\pi\frac{d_y y }{\lambda}\sin \theta_i } \times e^{-j\omega_{i,j}-j\omega_{x,y}},
\end{align}
where $\alpha_{i,j}$ is the channel power of the path in the AoDs $(\theta_i,\varphi_j)$, $g_{i,j}$ is the antenna gain of the UPA, $\lambda$ is the wavelength of the carrier, $d_x$ and $d_y$ denote the spacing of adjacent antennas in the horizontal and vertical directions, respectively. $\omega_{i,j}$ and $\omega_{x,y}$ denote the random phase error between different angles and different antennas, respectively.

To establish the communication, directional reference signal beams are transmitted from the BS to sweep the angular space. Suppose that there are $M$ reference signal beams in total. The precoding matrix of the $m$-th  beam is denoted as $\boldsymbol{W}^{(m)} \in \mathbb{C}^{N_X \times N_Y}$ with the $(x,y)$-th entry $W_{x, y}^{(m)} = e^{j\phi_{x,y}^{(m)}}$, where $\phi_{x,y}^{(m)}$ is the phase offset of the $m$-th beam at the $(x,y)$-th antenna. Then, the RSRP of the $m$-th beam is given by
\begin{align}
	rsrp_m =  P\left|\sum_{x=1}^{N_X}\sum_{y=1}^{N_Y} h_{x, y} W_{x, y}^{(m)}\right|^{2},
\end{align}
where $P$ denotes the BS transmission power. 

Note that $rsrp_m$ contains three independent random variable, namely, $\omega_{i,j}$, $\omega_{x,y}$ and $\alpha_{i,j}$. The statistical relationship between $rsrp_m$ and $\alpha_{i,j}$ is established by taking expectation of $rsrp_m$ with respect to the random phase error $\omega_{i,j}$ and $\omega_{x,y}$. We define $\boldsymbol{\alpha} =[\alpha_{1,1},\dots,\alpha_{N_V,N_H}]^T \in \mathbb{R}^{N_A}$, where $N_A = N_V \times N_H$. Without loss of generality, we assume that the channel APS $\rvx = \mathbb{E}_T[\boldsymbol{\alpha}]$ remains invariant over $T$ successive samples due to the static environment. Let $\bm{r} = [rsrp_1,\dots,rsrp_M]^T \in \mathbb{R}^M$. Based on the assumption of random phase errors, the following statistical relationship between the multi-beam RSRP and channel APS can be derived in \cite{10299600}
\begin{align}\label{eq:y=Ax}
	{\bf y} = \mathbb{E}_T[\bm{r}] = \rmA {\bf x},
\end{align}
where $\rmA \in \mathbb{R}^{M \times N_A}$ is the measurement matrix representing the beamforming gain of the antenna array. In practice, LSCM is conducted at the grid level using the measurement mean as an approximation for the expected multi-beam RSRP ${\bf y}$ in the grid. This grid-based method not only enhance the computational efficiency but also the solution robustness.

Given the sparse nature of wireless channels due to the limited scattering environment \cite{10521790}, LSCM aims to solve the sparse recovery problem as follows
\begin{subequations} \label{eq:LSCM}
    \begin{align}
        \min_{\rvx} \quad & \|\rmA \rvx - \rvy\|_2^2 \label{eq:LSCM-obj} \\
        \text{s.t.} \quad & \|\rvx\|_0 \leq C, \label{eq:LSCM-c1} \\
        & \rvx \geq \mathbf{0}, \label{eq:LSCM-c2}
    \end{align}
\end{subequations}
where $C$ represents the maximum number of channel paths, and the constraint \eqref{eq:LSCM-c2} corresponds to the non-negativity of the power spectrum. It is noteworthy that the measurement matrix $\rmA$ is determined by the antenna parameters, while the channel APS $\rvx$ is determined by the localized propagation environment. LSCM aims to estimate the environmental statistical invariant $\rvx$ to facilitate network performance evaluation under different network parameters. However, as mentioned in Section~\ref{Section:Introduction} and Section~\ref{Background: LSCM}, problem~\eqref{eq:LSCM} is challenging to solve due to the ill-conditioned measurement matrix $\rmA$. In addition, the existing methods in \cite{10299600,9940390,10924577} rely on location information for grid construction and do not account for its coupling with channel APS estimation, making them inapplicable to the characteristics of MR data.

\begin{figure}[t]
\subfloat[Spatial Distribution of MR Data \label{fig:MR_distribution}]
{\centering
\includegraphics[width=\linewidth]{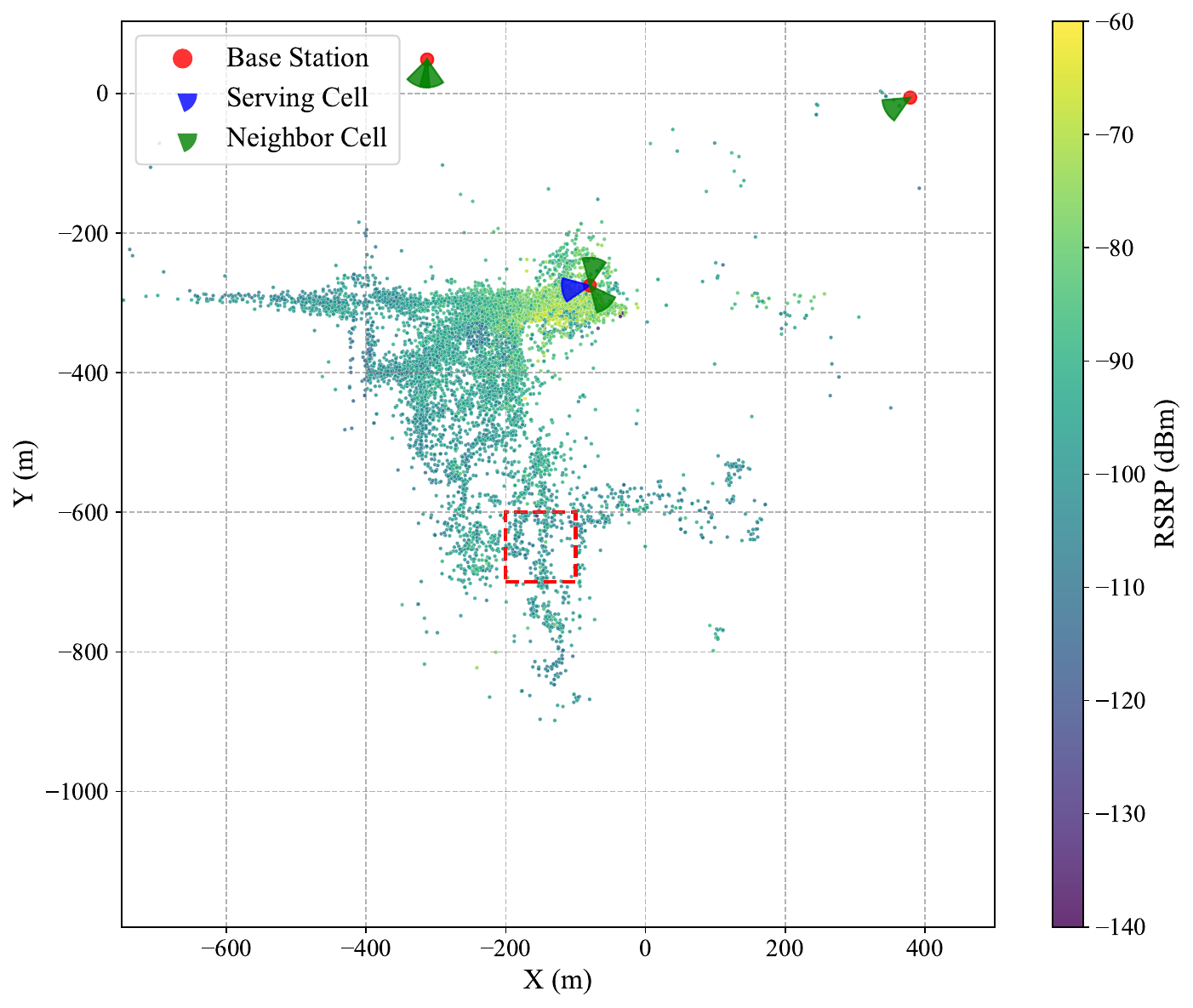}}
\\ 
\subfloat[Zoomed-in Region \label{fig:zoomed}]
{\centering
\includegraphics[width=0.51\linewidth]{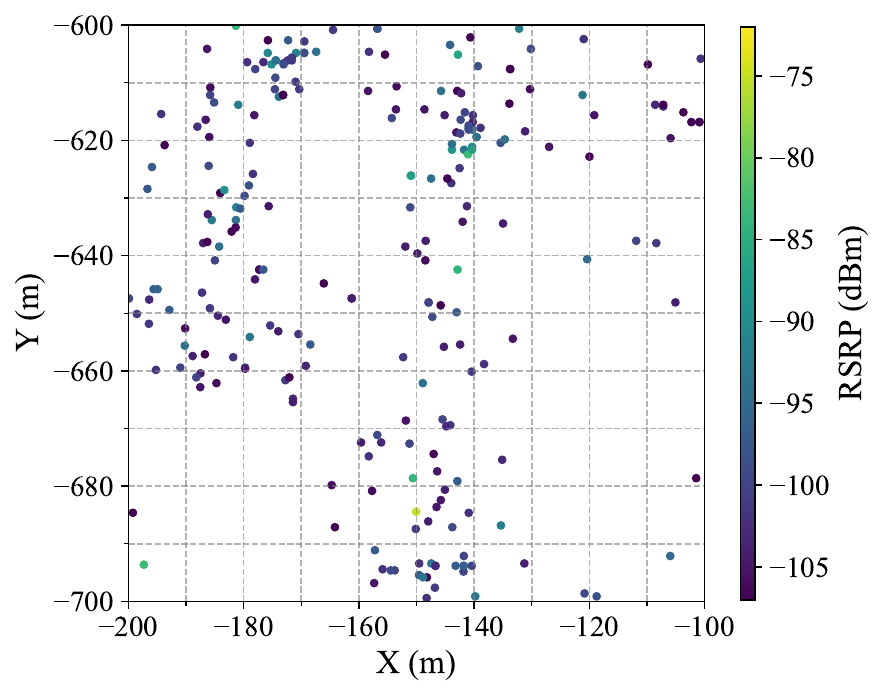}}
\subfloat[Uniform Grid \label{fig:uniform_grid}]
{\centering
\includegraphics[width=0.49\linewidth]{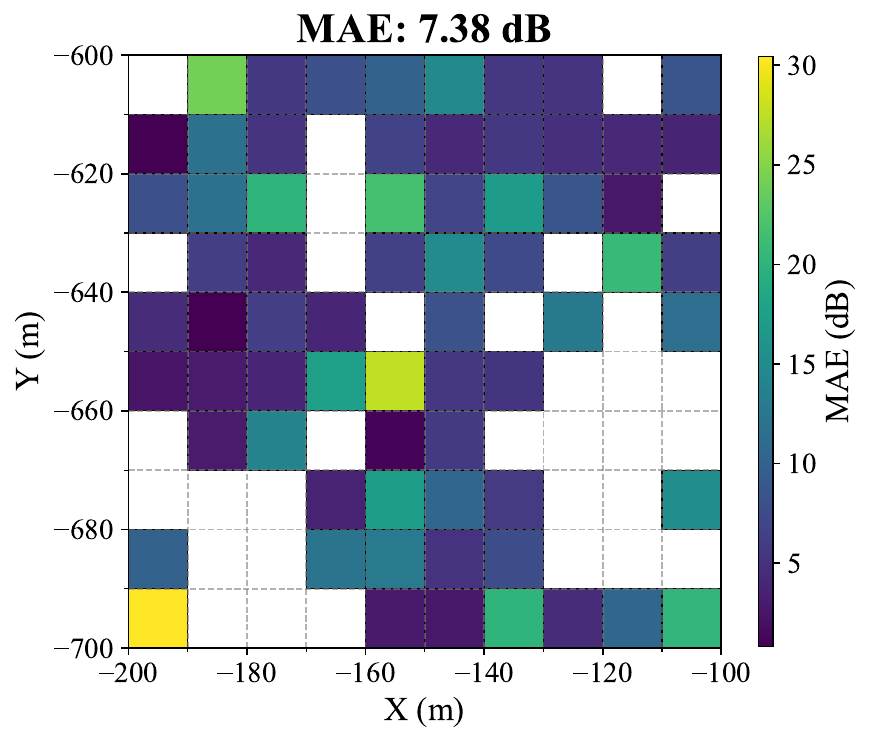}}
\caption{\textbf{Illustration of the spatial distribution of MR data, the zoomed region, and the corresponding uniform grid construction results.} Fig.~\ref{fig:MR_distribution} shows the spatial distribution of real-world MR data along with the corresponding RSRP values. Fig.~\ref{fig:zoomed} shows the zoomed-in region highlighted by the red dashed box in Fig.~\ref{fig:MR_distribution}. Fig.~\ref{fig:uniform_grid} shows the uniform grids ($10 \times 10~\mathrm{m}^2$ grids) within the zoomed region, as well as the MAE of RSRP prediction based on the estimated channel APS.}
\vspace{-5pt}
\end{figure}

\subsection{Problem Formulation}
To address the aforementioned challenges in MR data-driven LSCM, we tailor the problem formulation to the characteristics of MR data. For the lack of device location information in MR data, our first objective is to find a function that maps MR samples to their corresponding locations. We parameterize this function as $f_{\boldsymbol{\Xi}}$, where $\boldsymbol{\Xi}$ denotes the set of learnable parameters. Let $\rvp_i$ and $\hat{\rvp}_i \in \mathbb{R}^2$ denote the true and predicted locations of the $i$-th sample, respectively. Then, the MR localization problem can be formulated as
\begin{align}\label{eq:localization}
\boldsymbol{\Xi}^* = \arg\min_{\boldsymbol{\Xi}} \frac{1}{|\mathcal{P}|}\sum_{i\in\mathcal{P}} \| \hat{\rvp}_i - \rvp_i \|_2 ,
\end{align}
where $\mathcal{P}$ denotes the index set of MR samples with available location labels, and the mean distance error is used as the evaluation metric for localization accuracy. Given scarce fingerprint data (i.e., a small $\mathcal{P}$), a novel MR localization method will be proposed in Section~\ref{subsec:local}.

As previously mentioned, LSCM is conducted on grids, and typically employs uniform grids (e.g., $10 \times 10~\mathrm{m}^2$ grids). However, considering the complex propagation environment and the spatial non-uniformity of MR data, naive uniform grids solely based on geographic location may result in significant divergence of the channel characteristics within each grid\footnote{This divergence can be quantified by the mean absolute error (MAE) of RSRP prediction based on the estimated channel APS, calculated as $\text{MAE} = \frac{1}{M \times|\mathcal{G}|} \sum_{i \in \mathcal{G}} \left\| \mathbf{y}_i - \mathbf{A} \hat{\mathbf{x}}_i \right\|_1$, where $\rvy_i$ and $\rmA\hat{\rvx}$ denote the multi-beam RSRP of the serving cell in the $i$-th MR sample and the predicted value based on the estimated channel APS $\hat{\rvx}$, respectively, and $\mathcal{G}$ denotes the index set of MR samples within the grid.}. For example, Fig.~\ref{fig:MR_distribution} shows the spatial distribution of real-world MR data, while Fig.~\ref{fig:zoomed} presents the zoomed-in region highlighted by the red dashed box in Fig.~\ref{fig:MR_distribution}. It is evident that the MR data exhibits significant spatial non-uniformity, with some grids containing sparse samples. Fig.~\ref{fig:uniform_grid} displays the uniform grids within the zoomed-in region. It can be observed that grids with fewer samples tend to exhibit larger MAE values, which may degrade the performance of grid-based LSCM. Therefore, jointly considering both geographic location and channel structure in grid construction is essential for achieving improved performance in MR data-driven LSCM.

Based on the above analysis, we formulate a joint clustering and sparse recovery problem to unify grid construction and channel APS estimation. Given the total number of grids $K$ and the MR dataset $\{ \rvy_i \}_{i=1}^{N}$ from a specific serving cell, where $N$ is the number of samples and $\rvy_i \in \mathbb{R}^M$ denotes the multi-beam RSRP measurement of the serving cell in the $i$-th MR sample, our objective is to assign nearby samples to grids while ensuring that the samples within each grid share similar channel APS. To this end, we formulate this as the following joint clustering and sparse recovery problem
\begin{subequations} \label{eq:MR-LSCM}
    \begin{align}
        \min _{\{\mathcal{G}_k\}, \{\rvx_k\}} & \sum_{k=1}^{K} \frac{1}{|\mathcal{G}_k|} \sum_{i \in \mathcal{G}_k} \left( \| \rmA\rvx_k - \rvy_i \|_2^2 + \beta \| \bar{\rvp}_k - \hat{\rvp}_i \|_2^2 \right) \label{eq:MR-LSCM-obj} \\
        \text{s.t.} \quad & \ \mathcal{G}_k  \cap \mathcal{G}_j = \emptyset, \ \forall \; k,j \in  \left[ K \right], k \neq j, \label{eq:MR-LSCM-c1}  \\
        & \ \bigcup_{k=1}^K \mathcal{G}_k = \left[ N \right], \label{eq:MR-LSCM-c2} \\
        & \ \bar{\rvp}_k = \frac{1}{|\mathcal{G}_k|} \sum_{i \in \mathcal{G}_k}  \hat{\rvp}_i, \forall \;  k \in \left[ K \right], \label{eq:MR-LSCM-c3} \\
        & \ \| \rvx_k \|_0 \leq C, \forall \;  k \in \left[ K \right], \label{eq:MR-LSCM-c4} \\
        & \ \rvx_k \geq \mathbf{0},  \forall \;  k \in \left[ K \right],  \label{eq:MR-LSCM-c5}
    \end{align}    
\end{subequations}
In \eqref{eq:MR-LSCM}, $\mathcal{G}_k$ denotes the index set of MR samples within the grid $k$, and $\bar{\rvp}_k$ denotes the location centroid of the grid $k$ as shown in \eqref{eq:MR-LSCM-c3}. The constraints \eqref{eq:MR-LSCM-c1} and \eqref{eq:MR-LSCM-c2} represent the exclusivity and completeness of clustering, respectively. In the objective function \eqref{eq:MR-LSCM-obj}, the first term is from \eqref{eq:y=Ax}, it enforces the samples in the grid $k$ to have a common channel APS $\rvx_k$, while the second term quantifies the sample location difference within each grid. In addition, $\beta$ serves as a regularization factor to balance the relative importance of these two terms. 

By jointly optimizing the grid sets $\{\mathcal{G}_k\}$ and channel APS $\{\rvx_k\}$ to minimize the weighted sum of these two terms, our problem formulation in \eqref{eq:MR-LSCM} provides a more holistic and robust solution for MR data-driven LSCM. However, problem \eqref{eq:MR-LSCM} is challenging to solve due to the non-convex constraints and strong coupling of the optimization variables. In Section \ref{Subsection: JGCCM}, we propose a holistic solution to address these challenges.

\section{Proposed MR-LSCM Framework}
\label{sec:MR-LSCM}

In this section, we introduce the proposed MR-LSCM framework to address the problem~\eqref{eq:localization} and  problem~\eqref{eq:MR-LSCM}. As illustrated in Fig.~\ref{fig:MR_LSCM}, the framework consists of two key modules. First, the MR localization module predicts the locations of MR samples. Subsequently, the joint grid construction and APS estimation module alternately optimizes the grid sets and channel APS to capture the localized statistical channel structures with MR data. The design and interaction of each module are detailed below.

\subsection{HGNN-based Semi-supervised MR Localization}
\label{subsec:local}

To address the localization challenges posed by limited calibration fingerprints, we propose a hypergraph neural network (HGNN)-based semi-supervised MR localization method (HGNN-Loc). Hypergraphs are natural extensions of graphs by allowing an edge to associate with any number of vertices \cite{bretto2013hypergraph}. In this way, a hypergraph excels at modeling high-order correlations, compared to the traditional graph structure with pairwise connections. This makes it ideal for modeling the complex spatial correlations in MR data. The HGNN-Loc comprises two procedures, i.e., distance-aware hypergraph modeling and hypergraph convolution, as shown in Fig.~\ref{fig:HGNN}. 

\begin{figure*}[t]
    \centering
    \includegraphics[width=\linewidth]{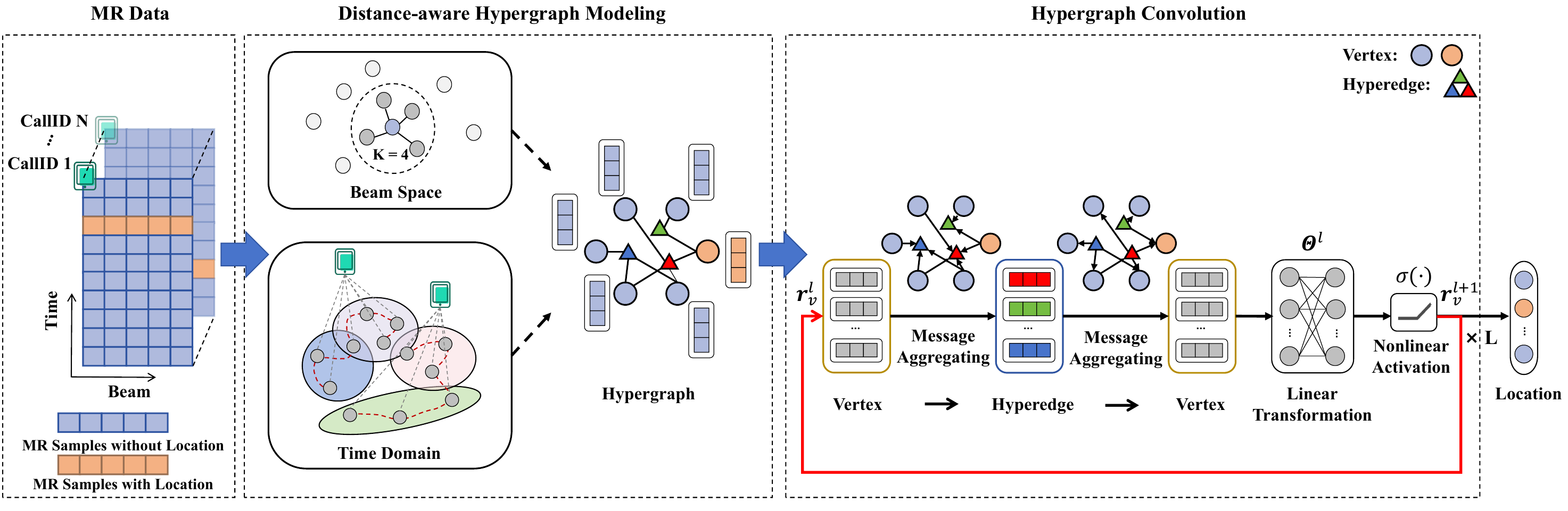}
    \caption{\textbf{The proposed HGNN-Loc.}
    In the left part, a distance-aware MR hypergraph is employed to model MR data.
    In the right part, the hypergraph convolution illustrates the message passing process via the vertex-hyperedge-vertex transformation and the red line from the output of the nonlinear activation function allows stacking multiple hypergraph convolution layers for deeper vertex embedding extraction.}
    \label{fig:HGNN}
    \vspace{-5pt}
\end{figure*}

\subsubsection{Preliminaries of Hypergraph} A hypergraph is defined as $\mathcal{H} = (\mathcal{V}, \mathcal{E})$, including a set of vertices $\mathcal{V}$ and a set of hyperedges $\mathcal{E}$. Each hyperedge $e \subseteq \mathcal{E}$ is associated with a non-empty subset of $\mathcal{V}$. In addition,  an incidence matrix $\rmH \in \mathbb{R}^{|\mathcal{V}| \times |\mathcal{E}|}$ is introduced to describe the structure of the hypergraph whose entries are defined as
\begin{align}\label{eq:consH}
    \rmH (v,e) = \begin{cases}
		1,  & \text{if} \; v \in e, \\
		0, & \text{otherwise}.
	\end{cases}
\end{align}

\subsubsection{Distance-aware Hypergraph Modeling} To capture the complex spatial correlations in MR data, it is essential to model a distance-aware hypergraph. We denote the distance-aware hypergraph as $\mathcal{H} = (\mathcal{V}, \mathcal{E})$ with $N$ vertices, where each vertex $v \in \mathcal{V}$ corresponds to an MR sample. Furthermore, to exploit the multi-modal information in the MR data, we introduce two types of hyperedges as follows.

\textbf{Hyperedge using beam space distance:}
This type of hyperedge  aims to connect spatially proximate vertices by exploiting the similarity of RSRP at adjacent locations. Given a vertex $v$ as the centroid, its $k$-nearest neighbors in the beam space can be connected by a hyperedge. We can define the set of all hyperedges for this type as
\begin{align}
    \mathcal{E}_1= \{N_{\text{beam}_k}(v) \;|\; \forall v \in \mathcal{V}\},
\end{align}
where $N_{\text{beam}_k}(v)$ denotes the $k$-nearest neighbors of the vertex $v$ measured by the beam space distance, which is calculated between vertices $v_i$ and $v_j$ as
\begin{align}
d(v_i, v_j) = \gamma \cdot \|\rvr_i - \rvr_j\|_2 + (1 - \gamma) \cdot \left(1 - \frac{\rvr_i \cdot \rvr_j}{\|\rvr_i\|_2 \|\rvr_j\|_2}\right),
\end{align}
where $\rvr_i = [\rvy_i; \rvy_i^n]$ denotes the multi-cell and multi-beam RSRP measurements\footnote{Note that in multi-beam RSRP measurements, missing values typically occur in weaker beams. For beams with missing values, we fill them with the minimum value (e.g., -140 dBm).} in the $i$-th sample. Here, $\rvy_i^{n} = [\rvy_i^1; \dots; \rvy_i^q; \dots; \rvy_i^Q]$ denotes the multi-beam RSRP measurements of the $Q$ neighboring cells in the $i$-th sample, with $\rvy_i^q \in \mathbb{R}^M$ representing the multi-beam RSRP measurements of the $q$-th neighboring cell. $\gamma$ is the weight coefficient. This weighted sum of the Euclidean and cosine distances jointly captures differences in both signal power and direction, enabling a more accurate assessment of spatial proximity.

\textbf{Hyperedge using {\em CallID} and timestamp:} 
Due to the limited mobility of mobile devices, temporally proximate MR samples from the same device are expected to be spatially adjacent. The same {\em CallID} identifies the same mobile device. Therefore, given the timestamp and {\em CallID} of a vertex $v$, all neighbors for this {\em CallID} within a time interval $\tau$ can be associated by a hyperedge. We can define the set of all hyperedges for this type as
\begin{align}
   \mathcal{E}_2 = \{N_{\text{time}_{\tau}}(v) \;|\; \forall v \in \mathcal{V} \},
\end{align}
where $N_{\text{time}_{\tau}}(v)$ denotes all neighbors of the vertex $v$ within the time interval $\tau$.

The left hand side of Fig.~\ref{fig:HGNN} illustrates the modeling of the distance-aware hypergraph. We construct two incidence matrices, $\rmH_1 \in \{0,1\}^{N \times |\mathcal{E}_1|}$ and $\rmH_2 \in \{0,1\}^{N \times |\mathcal{E}_2|}$, corresponding to the hyperedge sets $\mathcal{E}_1$ and $\mathcal{E}_2$, respectively, according to \eqref{eq:consH}. The overall hyperedge set is defined as $\mathcal{E} = \mathcal{E}_1 \uplus \mathcal{E}_2$, which is a multiset~\cite{blizard1989multiset}, where two hyperedges of different types associated with the same set of vertices are regarded as two different elements. By concatenating the two incidence matrices, we obtain the incidence matrix of the hypergraph $\rmH = [\rmH_1,\rmH_2]\in \{0,1\}^{N \times (|\mathcal{E}_1|+|\mathcal{E}_2|)}$.

\subsubsection{Hypergraph Convolution} Similar to graph learning, learning on a hypergraph can be viewed as the process of passing messages along the hypergraph structure. Inspired by the remarkable success in graph convolution, we employ hypergraph convolutions\cite{gao2022hgnn+} for location extraction.

Given a vertex $v$, we aggregate messages from the hyperedges containing $v$ to update its feature. Specifically, all hyperedges associated with $v$ can be represented by $\mathcal{N}_e(v)= \{e \in \mathcal{E} | \rmH(v,e) = 1\}$. To obtain the feature of each hyperedge $e \in \mathcal{N}_e(v)$, we aggregate messages from the vertices connected to $e$, i.e., $\mathcal{N}_{v}(e) = \{v \in \mathcal{V} \mid \mathbf{H}(v,e) = 1\}$. Let $\rvm_e^\ell$ and $\rvf_e^\ell \in \mathbb{R}^{d^\ell}$ denote the message and feature of hyperedge $e$ in the $\ell$-th layer, and let $\rvm_v^\ell$ and $\rvf_v^\ell \in \mathbb{R}^{d^\ell}$ denote the message and feature of vertex $v$ in the $\ell$-th layer, where $d^\ell$ is the feature dimension in the $\ell$-th layer. The hypergraph convolution in the $\ell$-th layer can be defined as

\begin{equation}\label{eq:HGNN}
    \left\{
    \begin{aligned}
        \rvm_e^\ell &= \sum_{v \in \mathcal{N}_v(e)} \frac{\rvf_v^\ell}{|\mathcal{N}_v(e)|}, \\
        \rvf_e^\ell &= w_e \cdot \rvm_e^\ell, \\
        \rvm_v^{\ell} &= \sum_{e \in \mathcal{N}_e(v)} \frac{\rvf_e^\ell}{|\mathcal{N}_e(v)|}, \\
      \rvf_v^{\ell+1} &= \sigma( \boldsymbol{\Theta}^\ell\rvm_v^{\ell}),
    \end{aligned}
    \right.
    \begin{array}{l}
        \left. \begin{aligned} \\ \\ \\ \end{aligned} \right\} \text{ vertex$\rightarrow$hyperedge} \\ \\ 
        \left. \begin{aligned} \\ \\ \\ \end{aligned} \right\} \text{ hyperedge$\rightarrow$vertex}
    \end{array}    
\end{equation}
where $w_e$ is the  weight of a hyperedge $e$ and is obtained by  \begin{align}
    w_e = \begin{cases}
		\operatorname{sigmoid}(w_1),  & \text{if} \; e \in \mathcal{E}_1, \\
		\operatorname{sigmoid}(w_2), & \text{otherwise}.
	\end{cases}
\end{align}
Learnable scalars $w_1$ and $w_2$ are used to differentiate the two types of modality information. $\boldsymbol{\Theta}^\ell \in \mathbb{R}^{d^{\ell+1} \times d^{\ell}}$ is a learnable linear transformation and $\sigma(\cdot)$ is a nonlinear activation function, which together are used to obtain the updated vertex feature $\rvf_v^{\ell+1}$, serving as the input feature for the next layer.

The right hand side of Fig.~\ref{fig:HGNN} illustrates the message passing process of hypergraph convolution via the vertex-hyperedge-vertex transformation. The model supports stacking $L$ hypergraph convolution layers for deeper vertex embeddings.

{\subsubsection{Semi-supervised Learning.} To make use of abundant unlabeled data with few location labels, we adopt a semi-supervised learning approach. For example, we adopt a 3-layer HGNN-Loc (i.e., $L=4$). For a vertex $v_i$, its feature in the first layer is the corresponding multi-cell and multi-beam RSRP $\rvf_{v_i}^1 = \rvr_i$, while the updated feature in the final layer represents its predicted location $\hat{\rvp}_i = \rvf_{v_i}^{4}$. All trainable parameters are $\boldsymbol{\Xi}=\{\boldsymbol{\Theta}^1, \boldsymbol{\Theta}^2, \boldsymbol{\Theta}^3, w_1, w_2\}$, and the loss function is articulated as
\begin{align}
	\text{LOSS}(\boldsymbol{\Xi}) = \frac{1}{|\mathcal{P}|} \sum_{i \in \mathcal{P}} \|\hat{\rvp}_{i}- \rvp_{i} \|_2^2.
\end{align}
Note that, in the training, only the predicted locations of MR samples with labels contribute to the computation of the loss function for backpropagation, thereby enabling the implementation of semi-supervised learning for MR localization.}

\subsection{Joint Grid Construction and Channel APS Estimation}
\label{Subsection: JGCCM}

Considering the non-convex constraints and strong coupling of the optimization variables in problem~\eqref{eq:MR-LSCM}, we employ an alternating optimization approach to jointly construct grids and estimate the channel APS.

\subsubsection{Grid Construction} For grid construction in (\ref{eq:MR-LSCM}), we fix $\{\rvx_k\}$ and optimize $\{\mathcal{G}_k\}$. The subproblem can be simplified as the following clustering problem:
\begin{subequations} \label{eq:sub1}
    \begin{align} 
        \min _{\{\mathcal{G}_k\}} & \sum_{k=1}^K  \frac{1}{|\mathcal{G}_k|}\sum_{i\in\mathcal{G}_k} \left( \| \rmA\rvx_k - \rvy_i \|_2^2 + \beta \| \bar{\rvp}_k - \hat{\rvp}_i \|_2^2 \right) \\
        \text{s.t.} \quad & \ \mathcal{G}_k  \cap \mathcal{G}_j = \emptyset, \ \forall \; k,j \in  \left[ K \right], k \neq j,   \\
        & \ \bigcup_{k=1}^K \mathcal{G}_k = \left[ N \right],  \\
        & \ \bar{\rvp}_k = \frac{1}{|\mathcal{G}_k|} \sum_{i \in \mathcal{G}_k}  \hat{\rvp}_i, \forall \;  k \in \left[ K \right]. \label{eq:sub1-c3}
    \end{align}  
\end{subequations}

Finding the global optimum of the minimum sum of squares clustering in problem~\eqref{eq:sub1} is NP-hard \cite{aloise2009np}. Therefore, we employ efficient heuristic algorithms to quickly converge to a local optimum. For instance, the classical $k$-means algorithm \cite{hartigan1979algorithm} iteratively updates cluster centers and assignments to group samples. In a similar fashion, we proceed with grid construction by alternating between the following two steps.

\textbf{Grid assignment step:} It is important to note that MR samples contain missing values; therefore, we define the valid distance as
\begin{align}\label{eq:valid_d}
    d_{\text{valid}}\left(\rvy_i,\rvx_k\right) = \frac{\operatorname{sum}(\rvm_i)}{M} \| \rvm_i \odot \left(\rmA\rvx_k - \rvy_i \right)\|_2,
\end{align}
where $\mathbf{m}_i \in \{{0, 1}\}^{M\times 1}$ is the mask vector associated with $\mathbf{y}_i$; specifically, $[\mathbf{m}_i]_j = 1$ indicates that the $j$-th beam value of $\mathbf{y}_i$ is valid, while $[\mathbf{m}_i]_j = 0$ otherwise. The scaling factor $\operatorname{sum}(\rvm_i)/M$ ensures appropriate normalization to account for the varying number of valid beams in each sample.

Given the cluster centers $\left\{ (\rmA\rvx_k, \bar{\rvp}_k) \right\}_{k=1}^{K}$, each sample can be assigned to the corresponding grid through a nearest-neighbor search, i.e.,
\begin{align} \label{eq:grid_assignment}
    g(i) = \arg\min_{k \in [K]} \; \left\{
 d_{\text{valid}}\left(\rvy_i,\rvx_k\right) + \beta  \| \bar{\rvp}_k - \hat{\rvp}_i \|_2 \right\} ,
\end{align}
where $g(i)$ denotes the grid assignment of the $i$-th sample. Thus, the index set of MR samples within the grid $k$ is given by $\mathcal{G}_k = \left\{ i \in [N]\; | \; g(i) = k \right\}$.

\textbf{Location centroid update step:} Update location centroids for samples assigned to each grid as in \eqref{eq:sub1-c3}.

\subsubsection{Channel APS Estimation.} For channel APS estimation in \eqref{eq:MR-LSCM}, we fix $\{\mathcal{G}_k\}$ and optimize $\{\rvx_k\}$, which leads to the following sparse recovery problem
\begin{subequations} \label{eq:sub2}
    \begin{align}
        \min_{\{\rvx_k\}} \quad & \sum_{k=1}^K \frac{1}{|\mathcal{G}_k|} \sum_{i\in\mathcal{G}_k}\|\rmA\rvx_k - \rvy_i\|_2^2  \\
        \text{s.t.} \quad 
        & \|\rvx_k\|_0 \leq C, \forall \;  k \in \left[ K \right], \\
        & \rvx_k \geq \mathbf{0}, \forall \;  k \in \left[ K \right].
    \end{align}
\end{subequations}

The following Lemma~\ref{lem1} shows that problem \eqref{eq:sub2} can be simplified to a problem like \eqref{eq:LSCM}. Solving the equivalent subproblem~\eqref{eq:sub2equ} enhances computational efficiency and solution stability through noise averaging and consistent sparsity.

\begin{lemma}\label{lem1}
Given matrix $\rmA$ and groups $\{\mathcal{G}_k\}_{k=1}^K$, problem (\ref{eq:sub2}) is equivalent to the following problem  
\begin{subequations} \label{eq:sub2equ}
    \begin{align}
        \min_{\{\rvx_k\}} \quad & \sum_{k=1}^K \left\|\rmA\rvx_k - \bar{\rvy}_k\right\|_2^2 \\
        \text{s.t.} \quad & \|\rvx_k\|_0 \leq C, \forall \;  k \in \left[ K \right], \\
        & \rvx_k \geq \mathbf{0}, \forall \;  k \in \left[ K \right],
    \end{align}
\end{subequations}
in the sense that the solution sets of (\ref{eq:sub2}) and (\ref{eq:sub2equ}) are identical, where \(\bar{\rvy}_k = \frac{1}{|\mathcal{G}_k|} \sum_{i \in \mathcal{G}_k} \rvy_i, \forall k \in [K]\).
\end{lemma}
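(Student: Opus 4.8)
The plan is to reduce the claim to a purely algebraic identity: for each fixed $k$, the inner sum $\frac{1}{|\mathcal{G}_k|}\sum_{i\in\mathcal{G}_k}\|\rmA\rvx_k - \rvy_i\|_2^2$ differs from $\|\rmA\rvx_k - \bar{\rvy}_k\|_2^2$ only by a term that does not depend on $\rvx_k$. Since the constraints in \eqref{eq:sub2} and \eqref{eq:sub2equ} are identical and the objective decouples across $k$, it suffices to show that each summand in $k$ has the same set of minimizers in $\rvx_k$, and then conclude that the joint solution sets coincide.

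First I would expand the squared norm using the bias--variance / ``sum of squares'' decomposition: writing $\rvz_i := \rmA\rvx_k - \rvy_i = (\rmA\rvx_k - \bar{\rvy}_k) + (\bar{\rvy}_k - \rvy_i)$, I get
\begin{align}
\frac{1}{|\mathcal{G}_k|}\sum_{i\in\mathcal{G}_k}\|\rmA\rvx_k - \rvy_i\|_2^2
= \|\rmA\rvx_k - \bar{\rvy}_k\|_2^2 + \frac{1}{|\mathcal{G}_k|}\sum_{i\in\mathcal{G}_k}\|\bar{\rvy}_k - \rvy_i\|_2^2,
\end{align}
where the cross term $\frac{2}{|\mathcal{G}_k|}(\rmA\rvx_k - \bar{\rvy}_k)^T\sum_{i\in\mathcal{G}_k}(\bar{\rvy}_k - \rvy_i)$ vanishes because $\sum_{i\in\mathcal{G}_k}(\bar{\rvy}_k - \rvy_i) = |\mathcal{G}_k|\bar{\rvy}_k - \sum_{i\in\mathcal{G}_k}\rvy_i = \vzero$ by the definition of $\bar{\rvy}_k$. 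The second term is a nonnegative constant $c_k$ independent of $\rvx_k$.

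Next I would assemble the two objectives: summing over $k$ gives that the objective of \eqref{eq:sub2} equals the objective of \eqref{eq:sub2equ} plus $\sum_{k=1}^K c_k$, a constant independent of all $\{\rvx_k\}$. Since adding a constant to an objective does not change the argmin, and since the feasible sets defined by \eqref{eq:MR-LSCM-c4}--\eqref{eq:MR-LSCM-c5} (i.e., $\|\rvx_k\|_0\le C$ and $\rvx_k\ge\vzero$) are the same in both problems, the two problems have identical solution sets. I would also remark that because both objectives are separable across $k$ and the constraints are separable across $k$, this argument can equivalently be carried out one block at a time.

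There is essentially no hard part here — the only thing to be careful about is the bookkeeping with the normalization factor $\frac{1}{|\mathcal{G}_k|}$ (it must multiply \emph{both} the $\rvx_k$-dependent term and the constant $c_k$ uniformly, which it does since every term in the inner sum carries the same weight), and noting that $\mathcal{G}_k$ is assumed nonempty so that $\bar{\rvy}_k$ is well defined. One subtlety worth a sentence: the claim is about equality of \emph{solution sets}, not merely of optimal values, and this is exactly what the ``constant shift'' argument delivers, since $\argmin_{\rvx_k\in\mathcal{F}} \big(h(\rvx_k)+c_k\big) = \argmin_{\rvx_k\in\mathcal{F}} h(\rvx_k)$ for any constant $c_k$ and any feasible set $\mathcal{F}$.
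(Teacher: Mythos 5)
Your proposal is correct and follows essentially the same route as the paper: both show that the in-grid average of $\|\rmA\rvx_k-\rvy_i\|_2^2$ equals $\|\rmA\rvx_k-\bar{\rvy}_k\|_2^2$ plus a constant independent of $\rvx_k$ (you via the add-and-subtract decomposition with a vanishing cross term, the paper via direct expansion and completing the square), and then invoke the constant-shift argument over identical feasible sets.
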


\begin{proof}
Expand the objective of (\ref{eq:sub2}):
\begin{align}\label{eq:proof1}
& \sum_{k=1}^K \frac{1}{|\mathcal{G}_k|}\sum_{i\in\mathcal{G}_k} \|\rmA\rvx_k - \rvy_i\|_2^2  \nonumber \\
=& \sum_{k=1}^K \left[ \|\rmA\rvx_k\|_2^2 - 2\langle\rmA\rvx_k, \frac{1}{|\mathcal{G}_k|} \sum_{i \in \mathcal{G}_k} \rvy_i\rangle + \frac{1}{|\mathcal{G}_k|}\sum_{i\in\mathcal{G}_k}\|\rvy_i\|_2^2 \right] \nonumber \\
=& \sum_{k=1}^K \left[ \|\rmA\rvx_k - \bar{\rvy}_k\|_2^2 - \|\bar{\rvy}_k\|_2^2 + \frac{1}{|\mathcal{G}_k|} \sum_{i\in\mathcal{G}_k}\|\rvy_i\|_2^2 \right].
\end{align}
Since the last two terms in  (\ref{eq:proof1}) are irrelevant to $\rvx_k$, we finish the proof.
\end{proof}

As discussed in Section~\ref{Background: LSCM} for problem~\eqref{eq:LSCM}, subproblem~\eqref{eq:sub2equ} is particularly challenging to solve due to the ill-conditioned measurement matrix $\rmA$. Rather than adopting the algorithms in \cite{10299600}, we propose a geometry- and missing-value-aware NNOMP (GM-NNOMP) algorithm, as detailed in Algorithm~\ref{alg:GM-NNOMP}. To mitigate the impact of column magnitude\cite{10299600}, we define the column-wise normalized coefficient matrix $\hat{\rmA} = \left[ \hat{\rva}_1,\dots,\hat{\rva}_{N_A}\right]$ with $\hat{\rva}_i \triangleq \rva_i/\|\rva_i\|_2$ denoting the $i$-th column vector of the measurement matrix ${\rmA}$. Then, to mitigate the influence of ill-conditioned ${\rmA}$, the column selection in the OMP-based algorithm (Step 8 in Algorithm~\ref{alg:GM-NNOMP}) is crucial. For example, the WNOMP algorithm in \cite{10299600} adopts the following column selection criterion \begin{align}
    q = \underset{i \in \bar{\mathcal{S}}}{\argmax}\left( \frac{\hat{\rva}^T_i\rvr_{\mathcal{S}}}{\|\hat{\rmA}^T\rvr_{\mathcal{S}}\|_2} + \frac{\|\rva_i\|_2}{\sum_{i=1}^{N_A}\|\rva_i\|_2}\right).
\end{align}
This design aims to balance the effect of column magnitude and correlation with the residual. However, its underlying intuition—that channel paths are more likely to exist at angles with large beam gains—does not align with the fact that channel paths are determined by the propagation environment rather than the antenna array. Moreover, the issue of incomplete observations, i.e., missing values in multi-beam RSRP measurements $\rvy$, has not been considered, which may further degrade the accuracy of sparse recovery.

To address these shortcomings, we first introduce a geometry-aware weighted column selection criterion, as detailed in Step 8 of Algorithm~\ref{alg:GM-NNOMP}. Considering the high-dimensional channel angular domain, the geometric relationship between the BS and the grid can be utilized to pre-assign lower weights to some less likely candidate paths, thereby mitigating the effects of the ill-conditioned measurement matrix. Given the grid location centroid $\bar{\rvp}=(x, y)$ and the BS location $\rvp_{\text{bs}}=(x_{\text{bs}}, y_{\text{bs}}, h_{\text{bs}})$, where $h_{\text{bs}}$ denotes the height of the BS, we compute the relative angle $(\theta,\varphi)$ between the BS and the grid by Step 3 of Algorithm~\ref{alg:GM-NNOMP}. For the $i$-th candidate column, corresponding to the angle $(\theta_i, \varphi_i)$, its weight $w_i$ should be higher if it is closer to the relative angle $(\theta, \varphi)$, especially for grids under the line-of-sight (LoS) scenario \cite{zhang2016measurement}. We define the geometric angular weight $w_i$ based on the radial basis function (RBF) kernel shown in Step 4 of Algorithm~\ref{alg:GM-NNOMP}, where the kernel parameters $\sigma_{\theta}$ and $\sigma_{\varphi}$ control the considered angular spread. By multiplying the residual correlation by the geometric angular weight, the column selection process prioritizes channel paths that are not only highly correlated with the residual but also geometrically feasible.

\begin{algorithm}[t]
\caption{Geometry- and Missing-value-aware NNOMP (GM-NNOMP)}
\label{alg:GM-NNOMP}
\textbf{Input:} $\rvy$, $\rmA$, $\bar{\rvp}$, $\rvp_{\text{bs}}$, $C$, $\sigma_{\theta}$, $\sigma_{\varphi}$\\
\textbf{Output:} $\rvx$
\begin{algorithmic}[1]
\State \textbf{Preprocessing:}
\State $\hat{\rmA} = [\hat{\rva}_1, \dots, \hat{\rva}_{N_A}]$, $\hat{\rva}i \triangleq \rva_i / \|\rva_i\|_2$
\State $\theta = \arctan\left((y - y_{\text{bs}}) / (x - x_{bs})\right)$, $\varphi = \arctan\left(h_{\text{bs}} / \sqrt{(x-x_{\text{bs}})^2 + (y-y_{\text{bs}})^2}\right)$
\State $w_i = \exp\left(-((\theta_i - \theta)^2 / \sigma_{\theta}^2) - ((\varphi_i - \varphi)^2 / \sigma_{\varphi}^2)\right), \forall i \in [N_A]$
\State $\Omega = \{ i \in [M]\mid [\rvy]_i \text{ is not missing} \}$
\State \textbf{Initialization:} $\rvx = \mathbf{0}$, $\mathcal{S} = \emptyset$, $\rvr_{\mathcal{S}} = \rvy$
\Repeat
\State $q = \arg\max_{i \in \bar{\mathcal{S}}} (w_i \cdot \hat{\rva}_{i}^T \rvr_{\mathcal{S}} )$
\State $\mathcal{S} = \mathcal{S} \cup {q}$
\State $\mathbf{x}_\mathcal{S} = \arg\min_{\rvz \ge 0,, \rmA_{\bar{\Omega},\mathcal{S}} \rvz \le \rvy_{\min}} \|\rmA_{\Omega,\mathcal{S}} \rvz - \rvy_{\Omega}\|_2^2$
\State $\rvx_{\bar{\mathcal{S}}} = \mathbf{0}$
\State $\mathcal{S} \leftarrow \operatorname{supp}(\rvx)$
\State $\rvr_{\mathcal{S}} = \rvy - \rmA \mathbf{x}$
\Until{$\max_{i \in \bar{\mathcal{S}}} (\hat{\rva}_{i}^T \rvr_{\mathcal{S}}) < 0$ or $|\mathcal{S}| = C$}
\end{algorithmic}
\end{algorithm}

Once the support $\mathcal{S}$ is determined, the next step is to estimate the values of nonzero entries $\rvx_{\mathcal{S}}$. Considering missing values typically occur on weak beams, we construct a missing-value-aware constrained NNLS. Given the index set of valid beam $\Omega$ and its complement $\bar{\Omega}$, $\rvx_{\mathcal{S}}$ can be estimated by Step 10 of Algorithm~\ref{alg:GM-NNOMP}, where $\rmA_{\Omega,\mathcal{S}}$ represents the submatrix of $\rmA$ formed by selecting rows indexed by $\Omega$ and columns indexed by $\mathcal{S}$, $\rvy_{\Omega}$ represents the RSRP values of the valid beams, and $\rvy_{\text{min}} \in \mathbb{R}^{|\bar{\Omega}|}$ is a vector with all entries equal to the minimum RSRP value in $\rvy_{\Omega}$. By incorporating the linear inequality to constrain the recovered values of the missing beams, we can ensure that the estimated channel APS is more physically plausible and more robust to missing measurements.

\subsection{Overall implementation of MR-LSCM}

At this point, the aforementioned two modules collectively form the MR-LSCM framework. First, HGNN-Loc exploits the multi-modal information in MR data by modeling it as a distance-aware hypergraph. The hypergraph captures spatial correlations among MR samples through hyperedges constructed based on beam space distance, as well as hyperedges formed according to {\em CallID} and timestamp. Furthermore, the location of each MR sample is extracted via hypergraph convolution. HGNN-Loc is trained in a semi-supervised manner using only a limited number of location labels. 

After predicting the locations of all MR samples $\{\hat{\rvp}_i\}_{i=1}^N$, we employ Algorithm~\ref{alg:JGCCAE} to perform joint grid construction and channel APS estimation. Considering the spatial non-uniformity of MR data, the grid sets $\{\mathcal{G}_k\}_{k=1}^K$ are more effectively initialized using $k$-means++ \cite{arthur2006k} on $\{(\mathbf{y}_i, \hat{\mathbf{p}}_i)\}_{i=1}^N$. Subsequently, the channel APS ${\mathbf{x}_k}$ is estimated by Algorithm~\ref{alg:GM-NNOMP} using the average multi-beam RSRP and location centroid, while the grid sets ${\mathcal{G}_k}$ are clustered based on the estimated channel APS. Through alternating optimization, grid construction and channel APS estimation iteratively refine each other.

\begin{algorithm}[t]
\caption{Joint Grid Construction and Channel APS Estimation}\label{alg:JGCCAE}
\textbf{Input:} $\{\rvy_i\}_{i=1}^N$, $\{\hat{\rvp}_i\}_{i=1}^N$, $\rmA$, $\rvp_{\text{bs}}$, $C$, $\sigma_{\theta}$, $\sigma_{\varphi}$.\\
\textbf{Output:} $\{\mathcal{G}_k\}_{k=1}^K$, $\{\rvx_k\}_{k=1}^K$.\\
\textbf{Initialization:} initialize $\{\mathcal{G}_k\}_{k=1}^K$ by using $k$-means++ on $\{(\rvy_i,\hat{\rvp}_i)\}_{i=1}^N$.
\begin{algorithmic}[1]
\For{iteration $i = 0,1,2,\dots,I$}
    \For{grid $k = 0,1,2,\dots,K$}
        \State Compute $\bar{\rvy}_k$ based on $\mathcal{G}_k$.
        \State $\rvx_k \leftarrow \operatorname{GM-NNOMP}(\bar{\rvy}_k,\rmA,\bar{\rvp}_k,\rvp_{\text{bs}},C,\sigma_{\theta},\sigma_{\varphi})$.
    \EndFor
    \State Obtain $\{\mathcal{G}_k\}_{k=1}^K$ according to \eqref{eq:grid_assignment};
    \State Update $\{\bar{\rvp}_k\}_{k=1}^K$ according to \eqref{eq:sub1-c3}.
\EndFor
\end{algorithmic}
\end{algorithm}

\section{Experimental Results}
\label{Section:Results}
In this section, the effectiveness of the proposed MR-LSCM framework is evaluated through comprehensive experiments on a real-world MR dataset. We provide a detailed analysis of each module within the framework, including the HGNN-Loc for MR localization, the GM-NNOMP algorithm for channel APS estimation, and the joint grid construction and channel estimation algorithm. Finally, we evaluate the robustness of the overall framework against localization errors.

\subsection{Real-world MR Dataset}
The real-world MR dataset was gathered in a city of China, as depicted in Fig.~\ref{fig:MR_distribution}. The dataset includes a total of 12254 samples from a serving cell and 8 neighboring cells, covering an area of approximately 1.2 km$^2$. It contains MR samples collected both before and after network parameter adjustments for the serving cell. The adjusted parameters include the electronic downtilt and coverage scenario of the antenna array, where the coverage scenario is adjusted by selecting different codebooks for the SSB beams. We use the samples collected before the serving cell parameter adjustment as the training set, which includes 9810 samples, and those collected after the adjustment as the test set, which includes 2444 samples. In multi-beam RSRP measurements, up to 8 SSB beams can be measured for each cell. In general, the serving cell provides much more measured beams than the neighboring cells. We uniformly discretize the tilt angle and azimuth angle with intervals of $2^{\circ}$ and $5^{\circ}$, respectively, i.e., $\theta_i \in \{-90, -88, \dots, 88, 90\}$ and $\varphi_j \in \{-90, -85, \dots, 265\}$. Thus, $\theta_i$ and $\varphi_j$ have $N_V=91$ and $N_H=72$ possible values, respectively, with a total of $N_A = N_V \times N_H = 6552$ angles. To enable reliable performance evaluation, the MR data is labeled with Global Positioning System (GPS) coordinates to serve as ground truth. 

\begin{figure}[!t]
    \centering
    \includegraphics[width=\linewidth]{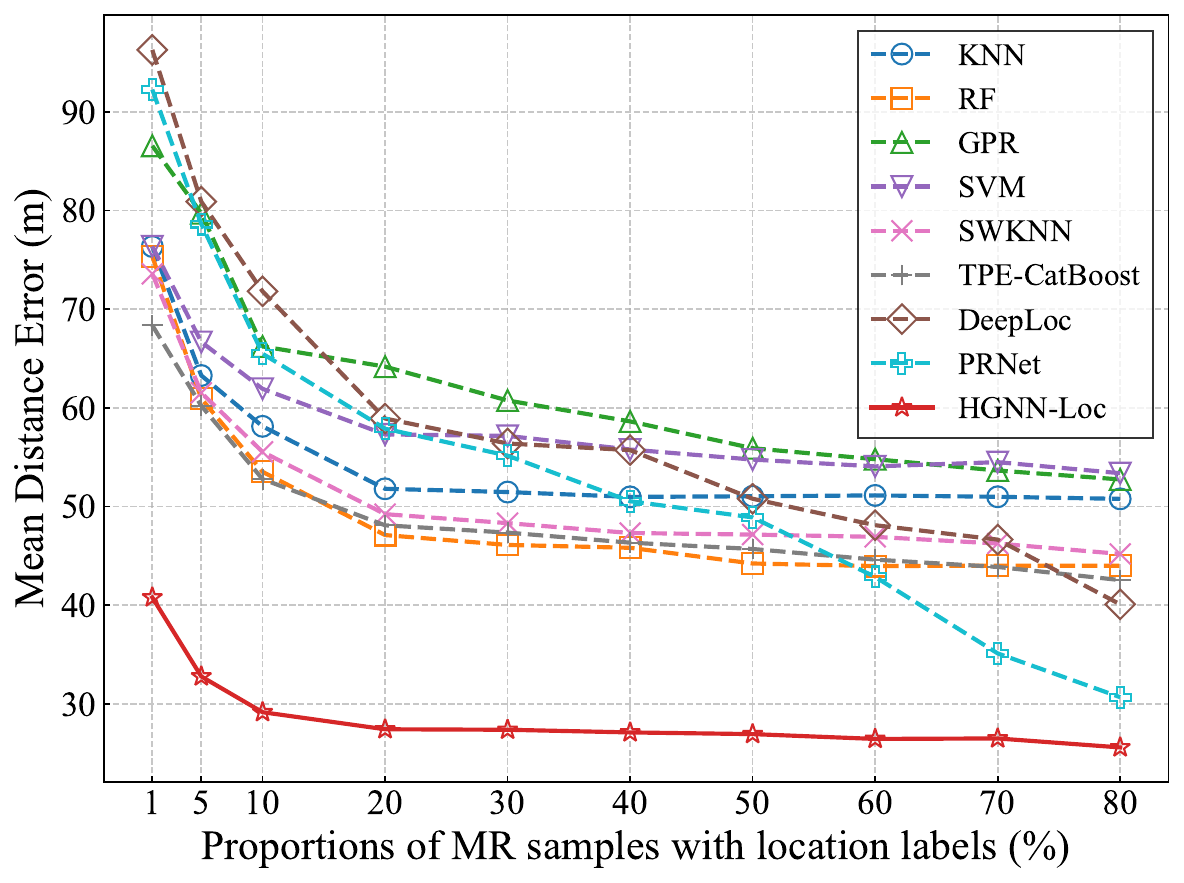}
    \caption{Mean distance error comparison for different localization methods under varying proportions of MR data with location labels.}
    \label{fig:LocalComp}
    \vspace{-5pt}
\end{figure}

\subsection{Performance Comparison of MR Localization}
We compare the proposed HGNN-Loc with state-of-the-art localization methods for performance benchmarking, categorized into machine learning-based and deep learning-based methods. The machine learning-based methods include KNN\cite{8430378}, random forest (RF)\cite{zhu2016city}, Gaussian process regression (GPR)\cite{KUMAR20161}, support vector machine (SVM)\cite{4079213}, SWKNN\cite{10556755} and TPE-CatBoost\cite{10486831}. The deep learning-based methods include DeepLoc\cite{deeploc} and  PRNet\cite{prnet}. The parameters of all methods are determined and tuned using ten-fold cross validation. For KNN, the optimal number of neighbors was found to be 5. A RBF kernel is used for both GPR and SVM. The SBM block size and the number of nearest neighbors for SWKNN are both set to 3. DeepLoc divides the geographical space into equally-sized grids, transforming the localization task into a classification task by predicting the grid of each sample. For DeepLoc, we use a grid size of 10 $m$ and a three-layer neural network. PRNet integrates convolutional neural networks, long short-term memory cells, and attention mechanisms to capture complex spatiotemporal correlations. The width of each layer in the proposed HGNN-Loc is $\{200,1000,2\}$. 

We assume that the location labels are available for only a certain proportion of the MR samples in the training set, and aim to predict the locations of the remaining MR samples within the training set. The proportion of labeled samples is varied from as low as 1\% to as high as 80\%. We evaluate localization accuracy with the mean distance error, defined as
\begin{align}
    \text{Mean Distance Error} = \frac{1}{|\mathcal{P}|}\sum_{i\in\mathcal{P}} \| \hat{\rvp}_i - \rvp_i \|_2 ,
\end{align}
where $\mathcal{P}$ denotes the index set of MR samples with location labels randomly selected from the training dataset according to the specified proportion.

As shown in Fig.~\ref{fig:LocalComp}, our proposed localization method consistently outperforms all other localization methods under different proportions of location labels. In particular, when location labels are extremely scarce, HGNN-Loc achieves a mean distance error of $40.79~m$ with only 1\% location labels. Although PRNet achieves a mean distance error of $30.68~m$ with 80\% location labels, its performance degrades significantly as the proportion of location labels decreases. Among two deep learning-based methods, both DeepLoc and PRNet exhibit much higher mean distance errors compared to HGNN-Loc. While PRNet integrates long short-term memory cells and attention mechanisms to capture contextual dependencies, its supervised learning-based training is prone to overfitting when only a small number of location labels are available. In contrast, HGNN-Loc demonstrates robust predictive capability even with limited calibration fingerprints, benefiting from the efficient inductive bias introduced by our distance-aware hypergraph modeling, which seamlessly fuses multi-modal information from MR data.

\begin{table}[H]
	\centering
	\caption{Test MAE Comparison of Different APS Estimation Algorithms}
        \begin{tabular}{|c|c|c|c|c|c|}
        \hline
                                                                 & LASSO & NNOMP & WNOMP & SWOMP & \begin{tabular}[c]{@{}c@{}}GM-\\ NNOMP\end{tabular} \\ \hline
        \begin{tabular}[c]{@{}c@{}}Test MAE \\ (dB)\end{tabular} & 8.44  & 7.76  & 7.34  & 7.41  & \textbf{6.53}                                         \\ \hline
        \end{tabular}
	\label{tb:test_mae}
\end{table}

\subsection{Performance Comparison of Channel APS Estimation}
In this subsection, we evaluate the effectiveness of our proposed channel APS estimation algorithm. The LASSO \cite{4839045}, NNOMP \cite{4655436}, WNOMP \cite{10299600}, and SWOMP \cite{10299600} algorithms are used as baselines for comparison. Since the ground-truth APS is unavailable, we evaluate performance using the mean absolute error (MAE) of RSRP prediction on the test set, defined as 
\begin{align}
    \text{Test MAE} = \frac{1}{K}\sum_{k=1}^K\frac{1}{M\times|\mathcal{G}'_k|}\sum_{i \in \mathcal{G}'_k}\| \rmA'\hat{\rvx}_k - \rvy'_i\|_1,
\end{align}
where $\hat{\rvx}_k$ denotes the estimated channel APS of the grid $k$, $\rmA'$ denotes the measurement matrix after network parameter adjustments, $\rvy'_i$ denotes the multi-beam RSRP of the $i$-th MR sample in the test set, $\rmA'\hat{\rvx}_k$ and $\rvy'_i$ are transformed into the dB domain. $\mathcal{G}'_k$ denotes the index set of MR samples within in the grid $k$ in the test set. To ensure geographic consistency of the same grids between the training and test sets, thereby guaranteeing the validity of the evaluation, we assign test samples to grids based on their true locations, defined as
\begin{align}
    \mathcal{G}'_k = \left\{i \in [N'] \mid k = \arg\min_{k \in [K]} \left(\min_{j \in \mathcal{G}_k}\|\rvp'_i - \rvp_j\|_2^2\right)\right\},
\end{align}
where $N'$ is the number of test samples and $\rvp'_i$ is the true location of the $i$-th MR sample in the test set. This assignment ensures that the training and test samples within the same grid are measured in the same geographical region.

To eliminate the impact of different grid constructions on channel APS estimation, all algorithms are implemented using uniform grids with a grid width of 10 meters. As shown in Table~\ref{tb:test_mae}, our proposed GM-NNOMP achieves the best performance. LASSO exhibits the highest Test MAE, which can be attributed to the estimation bias introduced by the shrinkage effect of the non-negativity constraint on the coefficients. NNOMP performs better, but still ranks as the second worst. This is primarily because the atom selection process is dominated by the large-magnitude columns of the measurement matrix, rendering the correlation with the residual largely ineffective. SWOMP, which aims to distinguish closely parallel columns using second-order statistics between the RSRP and channel APS, exhibits poorer performance due to inaccurate statistics estimation caused by a large number of missing beam measurements. WNOMP uses a more advanced weighted atom selection metric that combines the correlation between normalized columns and the residual with the column magnitudes, while maintaining the probability of selecting large-magnitude columns. However, its assumption that channel paths are more likely to exist at angles with large beam gains is not fundamental, as channel paths are entirely determined by the propagation environment. By incorporating environmental priors through geometry-aware weighted atom selection, GM-NNOMP is more likely to recover the correct channel paths. In addition, the missing-value-aware constrained NNLS enhances robustness to missing beams. 

\subsection{Performance Analysis of Joint Grid Construction and Channel APS Estimation}

To evaluate the effectiveness of the joint grid construction and channel APS estimation, we compare the following three baseline methods with Algorithm~\ref{alg:JGCCAE}.

\textit{1) Baseline 1:} This baseline method first adopts the naive uniform grids, then employs WNOMP to estimate the channel APS for each grid.

\textit{2) Baseline 2:} Similar to Baseline 1, except that this method employs $k$-means clustering to construct grids using locations.

\textit{3) Baseline 3:} In this baseline, we employ $k$-means clustering to construct grids based on multi-beam RSRP\footnote{For multi-beam RSRP with missing values, we perform clustering based on the valid distance in \eqref{eq:valid_d}.}. Then, the channel APS of each grid is estimated using WNOMP.

To provide a comprehensive analysis, we compare the Train MAE and test MAE of different methods under varying numbers of grids $K$\footnote{For uniform grids, the grid width increases uniformly from $10~m$ to $55~m$.}, where the Train MAE is computed on the training set as follows
\begin{align}
    \text{Train MAE} = \frac{1}{K}\sum_{k=1}^K\frac{1}{M\times|\mathcal{G}_k|}\sum_{i \in \mathcal{G}_k}\| \rmA\hat{\rvx}_k - \rvy_i\|_1,
\end{align}
where $\rmA$ denotes the measurement matrix before network parameter adjustments, $\mathcal{G}_k$ denotes the index set of MR samples within the grid $k$ in the training set, and $\rvy_i$ denotes the multi-beam RSRP of the $i$-th MR sample in the training set.

To eliminate the impact of localization errors, all methods employing location information for grid construction are based on true locations. As shown in Fig.~\ref{fig:train_mae} and Fig.~\ref{fig:test_mae}, we present the Train MAE and Test MAE of different methods as the number of grids $K$ increases, where the regularization factor $\beta$ and the maximum iteration number for MR-LSCM are set to 1 and 15, respectively. Baseline 1 and Baseline 2, which construct grids based solely on geographic location, exhibit the highest Train MAE and consequently result in poor Test MAE performance. This is because the RSRP variation within each grid is significant, indicating inconsistent channel characteristics that cannot be accurately represented by a unified channel APS. As shown in Fig.~\ref{fig:grid1} and Fig.~\ref{fig:grid2}, we present the grid construction results based on Baseline 1 and Baseline 2. It can be observed that grid construction based solely on geographic location fails to capture the spatial distribution of RSRP and does not align well with the localized propagation environment. 

\begin{figure}[t]
    \centering
    \includegraphics[width=0.95\linewidth]{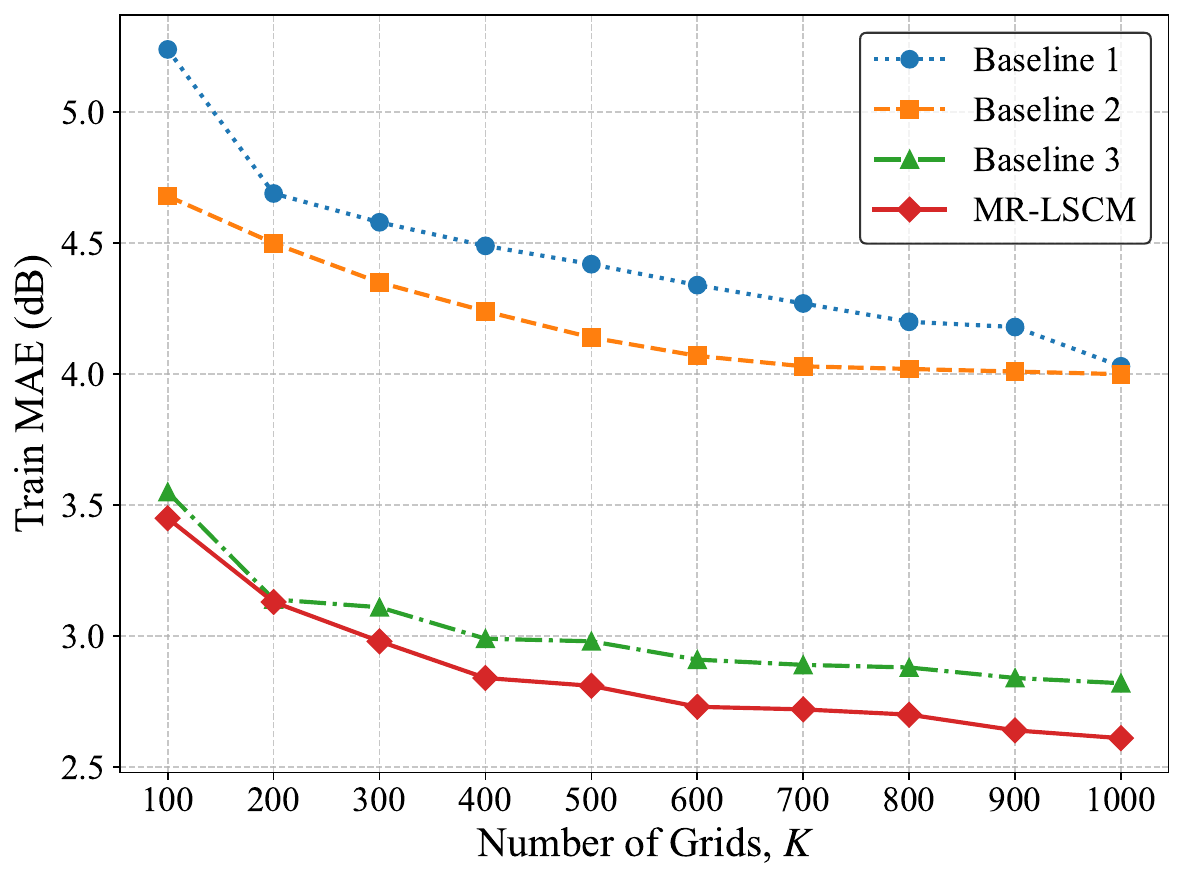}
    \caption{Train MAE comparisons for different baselines and MR-LSCM under varying numbers of grids $K$.}
    \label{fig:train_mae}
    \vspace{-5pt}
\end{figure}

In contrast, the other methods, which incorporate multi-beam RSRP in grid construction, achieve lower Train MAE. However, Baseline 3 exhibits poor Test MAE performance, which can be attributed to the underdetermined nature of the channel model. Considering the statistical relationship between multi-beam RSRP and channel APS (i.e., $\rvy = \rmA\rvx$, where $\rvy \in \mathbb{R}^{8 \times 1}$ and $\rvx \in \mathbb{R}^{6552 \times 1}$), the same multi-beam RSRP may originate from different channel APS. As shown in Fig.~\ref{fig:grid3}, the grids constructed solely based on RSRP are highly scattered in geographical space, which may lead to samples from regions with different channel characteristics being assigned to the same grid. Therefore, a lower Train MAE does not necessarily equate to consistent channel characteristics within each grid. MR-LSCM, which construct grids based on joint features of predicted position and multi-beam RSRP, achieves lowest Test MAE by reducing both RSRP variation and location differences within each grid, as shown in Fig.~\ref{fig:grid4}. This demonstrates that our objective function in \eqref{eq:MR-LSCM} can facilitate grid construction that better aligns with the complex localized propagation environment. Moreover, Fig.~\ref{fig:test_mae_joint} presents the Test MAE for MR-LSCM at each iteration of Algorithm~\ref{alg:JGCCAE}, demonstrating progressively improving performance. This validates our joint grid construction and channel APS estimation design, where the mutually refining mechanism between these two tasks enables superior solutions under complex environments and spatially non-uniform data.

\begin{figure}[t]
    \centering
    \includegraphics[width=0.95\linewidth]{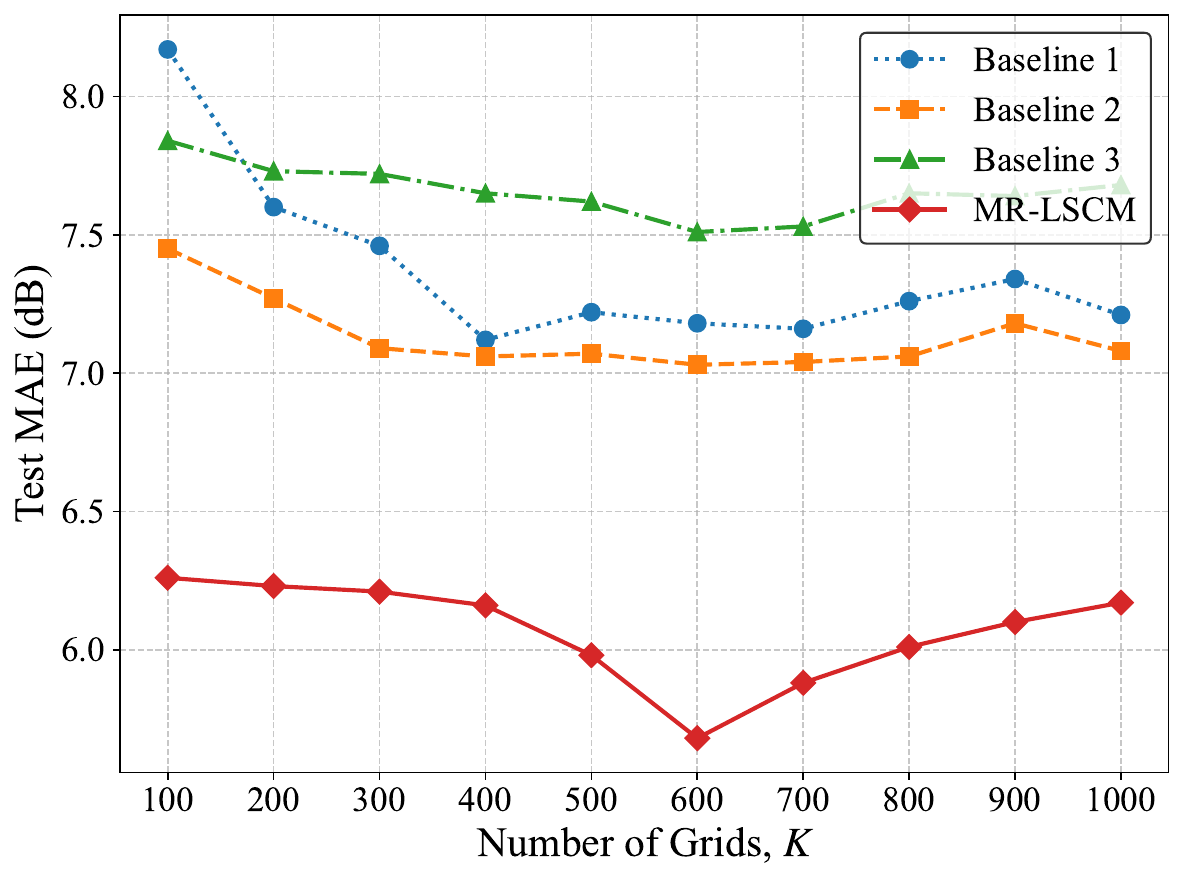}
    \caption{Test MAE comparisons for different baselines and MR-LSCM under varying numbers of grids $K$.}
    \label{fig:test_mae}
    \vspace{-5pt}
\end{figure}

\begin{figure}[t]
\centering
    \resizebox{0.95\columnwidth}{!}{
        \subfloat[Baseline 1\label{fig:grid1}]{
            \includegraphics[height=5cm]{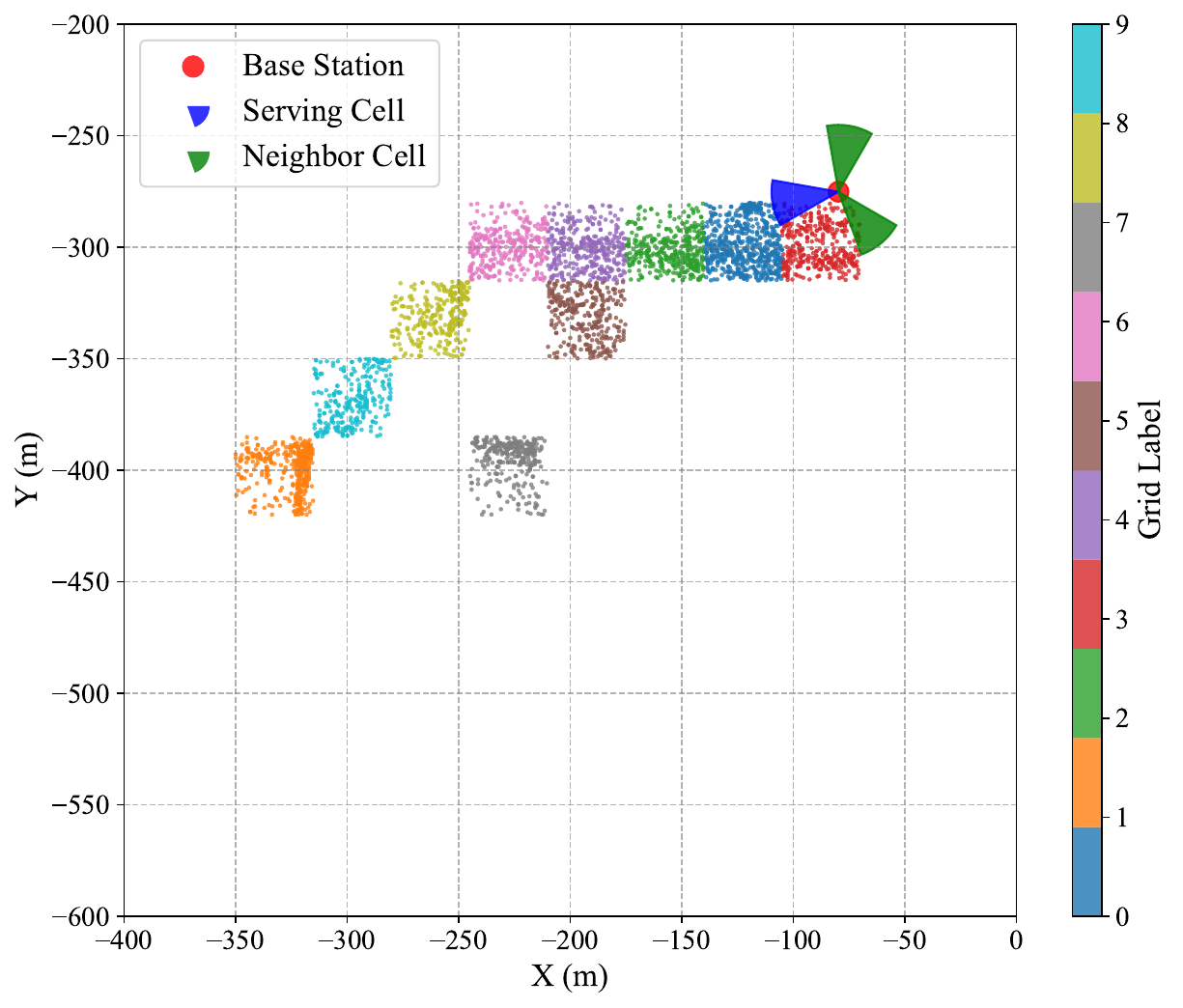}
        }
        \subfloat[Baseline 2\label{fig:grid2}]{
            \includegraphics[height=5cm]{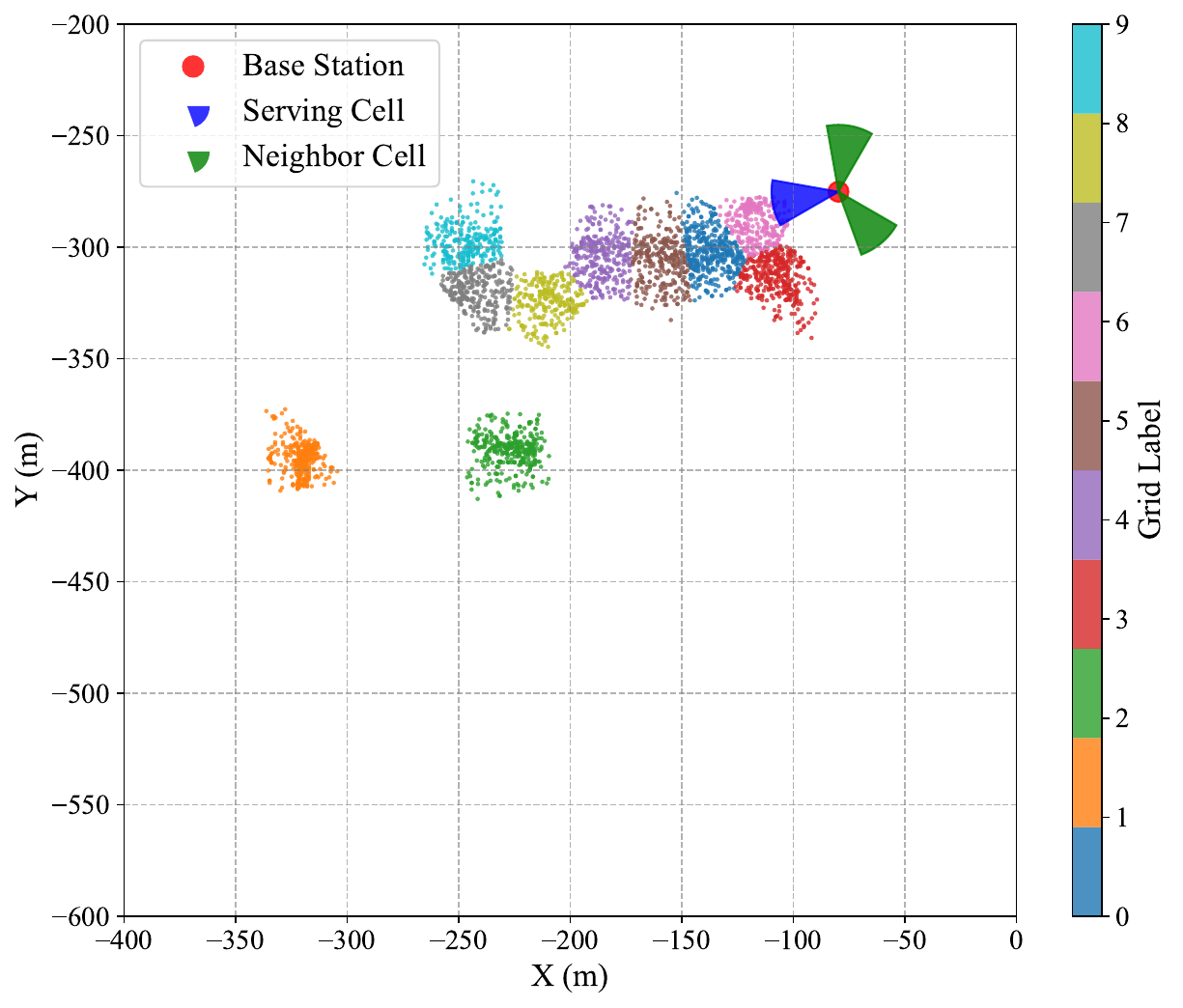}
        }
    }
\\
\centering
    \resizebox{0.95\columnwidth}{!}{
        \subfloat[Baseline 3\label{fig:grid3}]{
            \includegraphics[height=5cm]{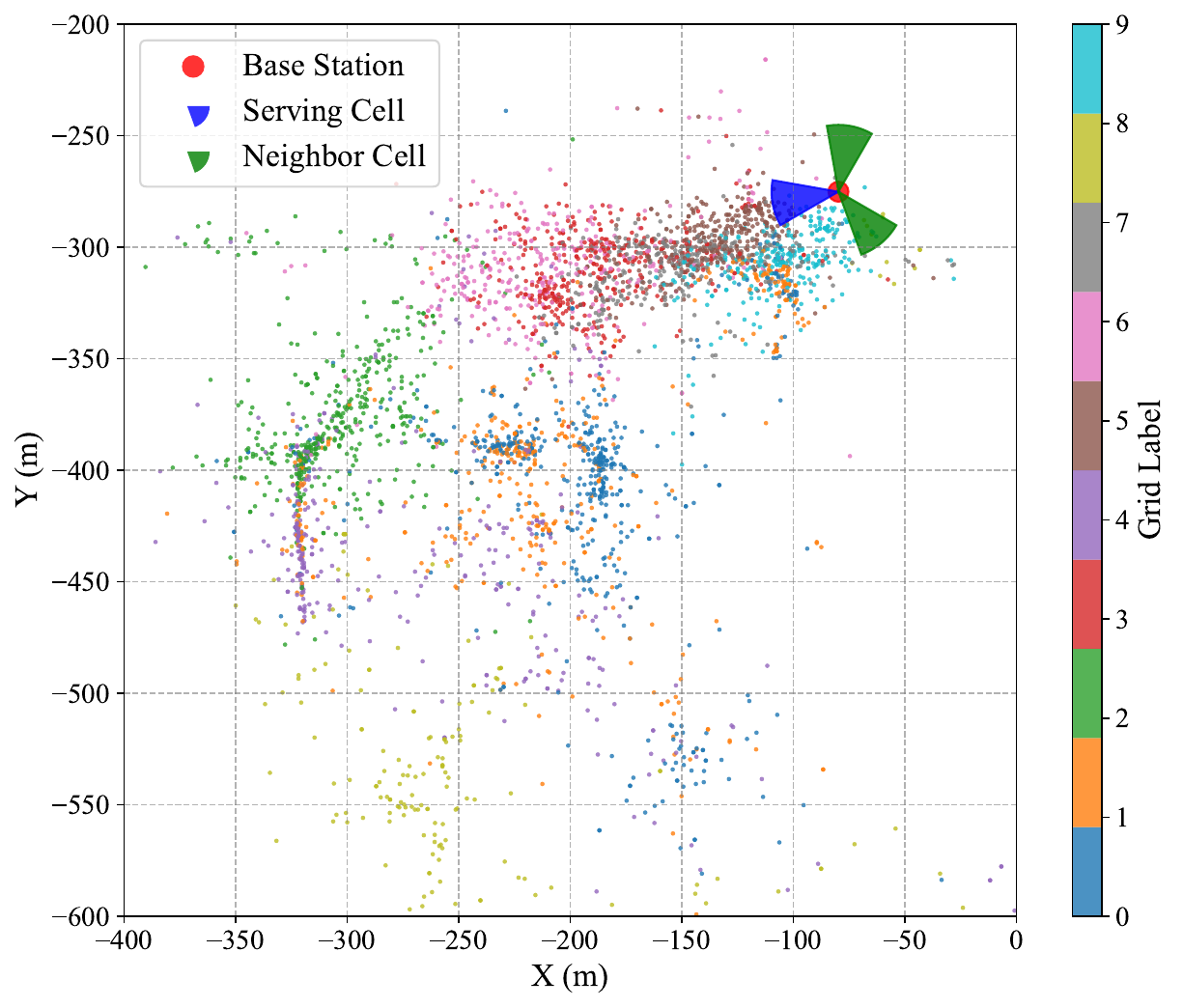}
        }
        \subfloat[MR-LSCM\label{fig:grid4}]{
            \includegraphics[height=5cm]{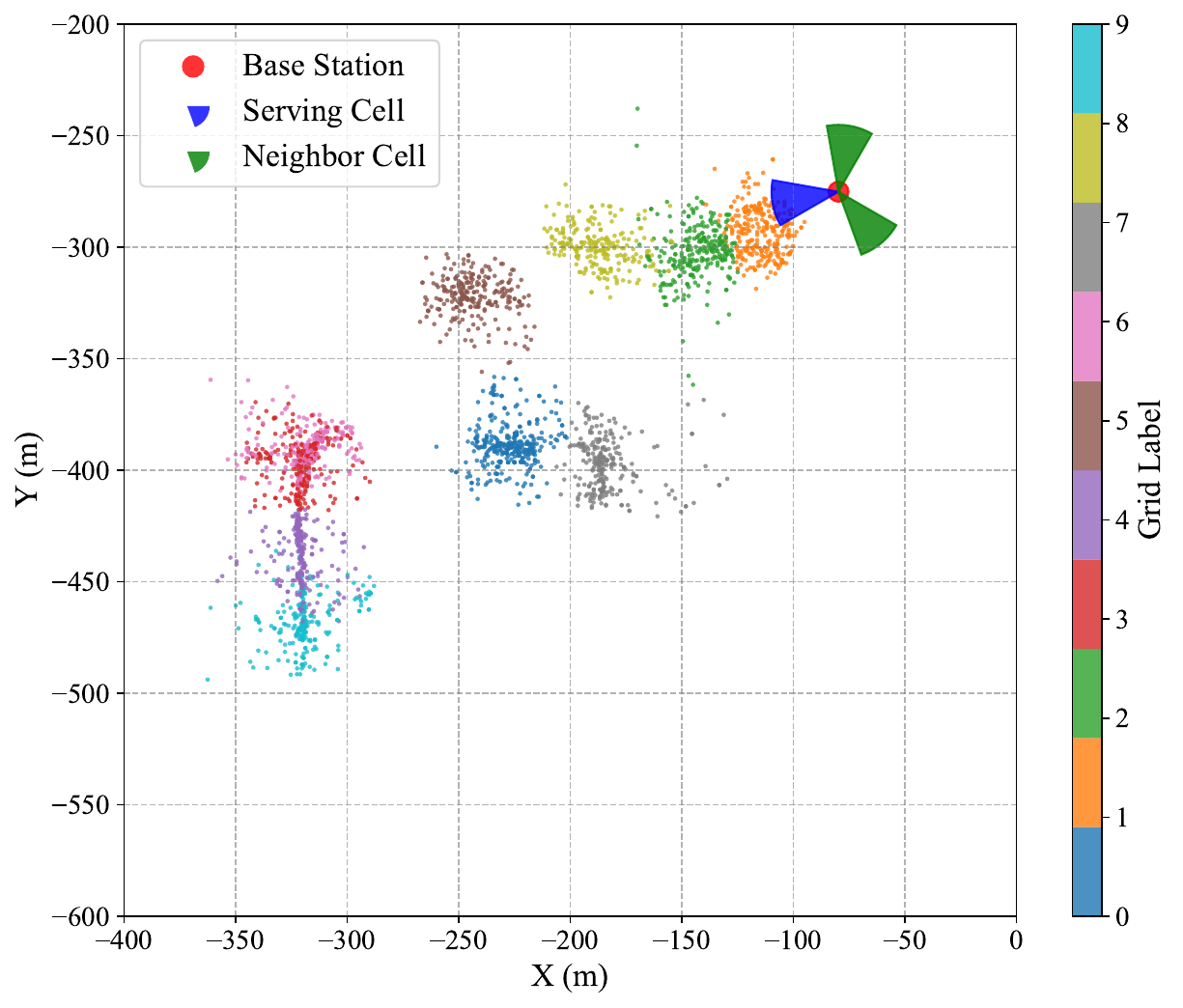}
        }
    }
\caption{Grid construction based on different methods. The number of grids $K$ is set to 500, and for clarity, only the top 10 grids with the largest number of samples are displayed.}
\vspace{-5pt}
\end{figure}

\begin{figure}[t]
    \centering
    \includegraphics[width=0.95\columnwidth]{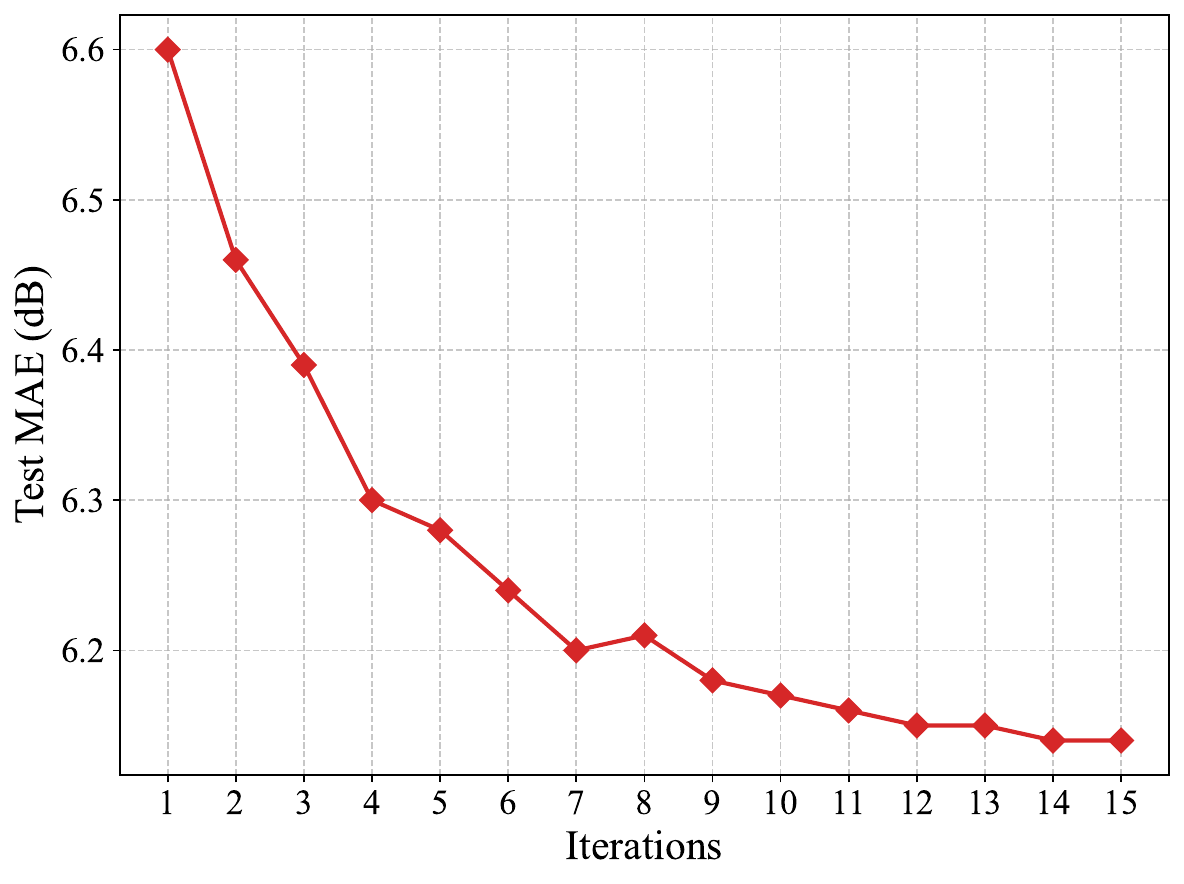}
    \caption{Convergence trend of Test MAE for MR-LSCM. The number of grids $K$ is set to 1000.}
    \label{fig:test_mae_joint}
    \vspace{-5pt}
\end{figure}

We also observe that the optimal number of grids for different methods follows a similar trend: as the number of grids increases, the test MAE initially decreases and then increases. For example, MR-LSCM achieves the lowest test MAE of 5.68 dB when the number of grids is 600. This is because a larger number of grids enables finer discretization but also exacerbates the effects of sparse measurements. In addition, increasing the number of grids leads to higher storage and computational overhead, which should also be considered when determining the appropriate grid number.

\subsection{Robustness Analysis of MR-LSCM Framework}
In this experiment, we focus on analyzing the robustness of the overall framework against localization errors. Specifically, we train the HGNN-Loc model with varying proportions of location labels followed by jointly constructing grids and estimating channel APS based on the predicted locations. When 100\% locations labels are used, the results correspond to those in the previous subsection. We compare the performance of MR-LSCM with Baseline 1 and Baseline 2. The number of grids is set to 1000. Fig.~\ref{fig:test_mae_loc} displays the Test MAE of different methods. Baseline 1 and Baseline 2, which relies solely on location information, exhibits significant performance fluctuations and is more sensitive to localization errors. In contrast, MR-LSCM, which accounts for the RSRP variance within grids, achieve more consistent performance. This is because the spatial continuity of RSRP mitigates the impact of localization errors on grid construction. Overall, MR-LSCM demonstrates the best performance and robustness.

\begin{figure}[!t]
    \centering
    \includegraphics[width=0.95\columnwidth]{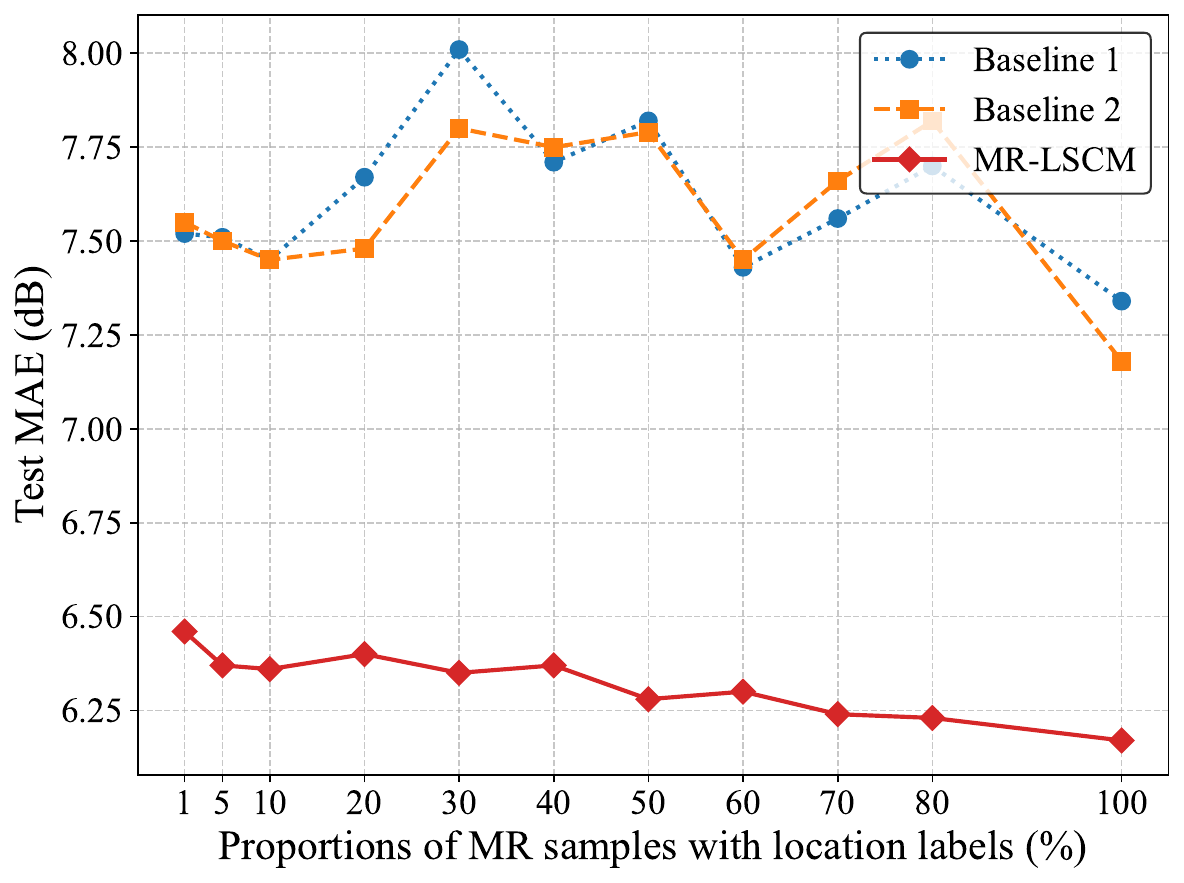}
    \caption{Test MAE comparisons for different methods under varying proportions of MR data with location labels.}
    \label{fig:test_mae_loc}
    \vspace{-5pt}
\end{figure}

\section{Conclusion}
\label{Section:Conclusion}
In this paper, we propose the MR-LSCM framework to address the limitations of DT-based LSCM by exploiting the low-cost and extensive collection of MR data. To enhance LSCM with MR data, we design two modules tailored to its characteristics. The first module employs the proposed HGNN-Loc to address the lack of device location information in MR data. In HGNN-Loc, MR data are modeled as a distance-aware hypergraph to capture spatial correlations among samples by utilizing their inherent multi-modal information. The locations are subsequently extracted through hypergraph convolution. Through semi-supervised learning, HGNN-Loc is effectively trained with minimal location labels. To enhance MR-LSCM in complex environments with spatially non-uniform data, the second module integrates grid construction and channel APS estimation by leveraging their correlation. We formulate a joint clustering and sparse recovery problem and solve it via alternating optimization. Grid construction is achieved by clustering based on predicted locations and estimated channel APS. The channel APS of each grid is estimated via sparse recovery based on average multi-beam RSRP. To improve the robustness of sparse recovery under the ill-conditioned measurement matrix and incomplete observations, we introduce a GM-NNOMP algorithm that incorporates environmental information and physical priors. Through extensive experiments on a real-world MR dataset, we demonstrate the superior performance and robustness of MR-LSCM in localization and channel modeling, ensuring reliable digital twin-assisted network optimization.

\bibliographystyle{IEEEtran}
\bibliography{IEEEabrv,main}

\end{document}